\newtheorem{assumption}{Assumption}
\newtheorem{proposition}{Proposition}
\begin{document}

\title{Optimal Spectrum Sensing Policy with Traffic Classification in RF-Powered CRNs}

\author{\IEEEauthorblockN{Hae Sol Lee, Muhammad Ejaz Ahmed, and Dong In Kim} }

\maketitle

\begin{abstract}
An orthogonal frequency division multiple access (OFDMA)-based primary user (PU) network is considered, which provides different spectral access/energy harvesting opportunities in RF-powered cognitive radio networks (CRNs). In this scenario, we propose an optimal spectrum sensing policy for opportunistic spectrum access/energy harvesting under both the PU collision and energy causality constraints. PU subchannels can have different traffic patterns and exhibit distinct idle/busy frequencies, due to which the spectral access/energy harvesting opportunities are application specific.
Secondary user (SU) collects traffic pattern information through observation of the PU subchannels and classifies the idle/busy period statistics for each subchannel. Based on the statistics, we invoke stochastic models for evaluating SU capacity by which the energy detection threshold for spectrum sensing can be adjusted with higher sensing accuracy. To this end, we employ the Markov decision process (MDP) model obtained by quantizing the amount of SU battery and the duty cycle model obtained by the ratio of average harvested energy and energy consumption rates. We demonstrate the effectiveness of the proposed stochastic models through comparison with the optimal one obtained from an exhaustive method. 
\end{abstract}

\begin{IEEEkeywords}
OFDMA, RF-powered cognitive radio networks (CRNs), traffic classification, spectral access, energy harvesting, spectrum sensing. 
\end{IEEEkeywords}

\IEEEpeerreviewmaketitle

\section{Introduction}

Recently, there have been many proposals on designing efficient circuits and devices for radio frequency (RF) energy harvesting suitable for low-power wireless applications \cite{T11} -\cite{X16}. The maximum power available for RF energy harvesting at a free space distance of 40 meters is known to be 7$\mu$W and 1$\mu$W for 2.4GHz and 900MHz frequency, respectively \cite{A12}. With the RF energy harvesting capability, wireless, especially mobile device can operate perpetually without periodic energy replenishment. In cognitive radio networks (CRNs), secondary users (SUs) equipped with RF energy harvesting device can opportunistically not only access primary user (PU) channels but also harvest RF energy carried on PU channels through spectrum sensing. Hence, selecting PU channel for harvesting or transmitting through accurate spectrum sensing is a crucial component for SUs to achieve an optimal performance. 

Most works on RF energy harvesting in CRNs rely on a predefined assumption on PU channel idle time distribution \cite{L13}  -\cite {AS12} and focus on optimizing SU spectral access based on the battery level. Exploiting multiple channels will offer different idle channel distributions and PU's signal strengths. Therefore, it will give SU more chances to choose between transmitting data and harvesting energy, which in turn improves the transmission and energy harvesting efficiency, as demonstrated in \cite{D14} and \cite{D15}. However, in practice, the channel idle time distribution depends on specific traffic patterns carried over the PU channel \cite{M14}. Therefore, it is of paramount importance for SU to be aware of PU traffic patterns so that it can adapt its harvesting/transmission strategies accordingly. The challenge, however, lies in efficient classification of PU traffic patterns based on the applications' fingerprints (traffic features).

Existing solutions for the traffic patterns identification fall into the following three categories: 1) port-based, 2) signature-based, and 3) deep packet inspection. But these approaches do not perform well under the dynamic nature of the traffic patterns. In this paper, we propose Dirichlet process mixture model (DPMM) to efficiently classify various applications (traffic patterns). The DP is a family of Bayesian nonparametric (BNP) models which are mostly used for density estimation, clustering, model selection/averaging, etc. The DPs are nonparametric which means that the number of hidden traffic applications is not known in advance. Due to the nonparametric nature of these models, they do not require the number of clusters (applications) to be known {\it a priori}. Moreover, such models can adapt dynamically over time as the number of traffic patterns grows. The proposed DPMM traffic classification is unsupervised, and based only on observations without any control overhead. 

Based on the classification of PU traffic patterns, we propose an optimal spectrum sensing policy for opportunistic spectral access/energy harvesting in RF-powered CRNs. Towards this, SU collects the information on distinct traffic patterns through observation of the PU subchannels and identifies the PU traffic patterns by classifying the received data packets into distinct features. Following the DPMM approach developed in \cite{Y16}, \cite{K16}, we can classify the PU subchannels with idle/busy period statistics which will be used for optimal spectral access/energy harvesting. For this, we need to obtain the appropriate energy detection thresholds associated with distinct traffic patterns, so as to establish the optimal spectrum sensing policy with higher sensing accuracy. 

Suppose the energy detection threshold is low, then SU will likely identify the PU subchannel as busy due to noise/co-channel interference. Hence, the probability of the PU subchannel being recognized as idle is low, even if the PU subchannel in fact is idle, resulting in less transmission opportunity for SU. On the other hand, if the threshold is high, the probability of the PU subchannel being recognized as busy is low, causing SU to transmit aggressively. This will result in collision of PU and SU transmissions, reducing their transmission efficiency, and also incur the energy depletion of SU device due to its frequent transmissions. Therefore, we invoke two stochastic models for evaluating SU capacity, and then derive an optimal energy detection threshold for spectrum sensing to maximize the SU capacity. To this end, we employ the Markov decision process (MDP) model obtained by quantizing the amount of SU battery and the duty cycle model obtained by considering the average energy harvesting and energy consuming rates. We confirm that the SU capacity obtained from two stochastic models can be close to the optimal capacity obtained from an exhaustive method. 

The rest of the paper is organized as follows. Section II describes the system model along with the key assumptions. In Section III, we give an overview of traffic classification based on the DPMM approach. In Section IV, we formulate an optimization problem using the duty cycle model for deriving an optimal energy detection threshold, while in Section V we propose a stochastic model based on the MDP for SU with some statistic information about PU subchannels. Finally, the performance obtained from analysis is examined in Section VI, and concluding remarks are given in Section VII. 

\section{System Model}

We consider a RF-powered CRN as shown in Fig. 1, where SU is equipped with RF energy harvesting capability and performs opportunistic transmission or energy harvesting by accessing the corresponding PU subchannel. We assume a PU network which employs orthogonal frequency division multiple access (OFDMA) with a total of $N_{c}$ subchannels and synchronous time-slot based communication across PUs. Here, the frequency band is divided into several non-overlapping narrow frequency subbands assigned to different PUs. 

\begin{figure}[h!]
\centering
\includegraphics[width=3.2in]{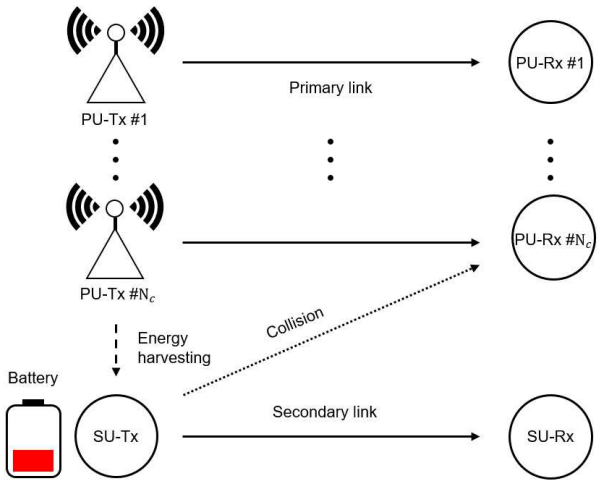}
\caption{Illustration of cognitive radio network with energy harvesting.}
\label{fig:1}
\end{figure}

We assume the subchannel of each PU shows independent idle/busy time statistics, varying with $K$ traffic sources. If SU can identify $K$ traffic sources on PU subchannels, it will increase opportunities for SU transmission and energy harvesting. To identify $K$ traffic sources and classify their traffic patterns, we consider the three features, such as the packet length, packet interarrival time, and variance in packet length. The three features are observed by SUs by inspecting the packet header from PU traffic. It is assumed that SUs collaborate with each other in sharing these features via a common control channel. 

Traffic classification is important in the problem under consideration. Each application follows a unique pattern, and recognizing this pattern is important for the following two reasons: 1) we can predict how frequently the energy arrives (packet interarrival). 2) we can predict for how long the subchannel is occupied. For each traffic application those values are different, so estimating those values would save us energy by avoiding undue spectrum sensing because in that case we do not know the pattern of the traffic application. If traffic classification is removed, we would waste undue energy by blindly sensing channels. In our approach we can use this energy to transmit data instead.

We assume the packet arrival rate on subchannel $c$ follows a random process with mean $\lambda_c$. After clustering subchannels, SU identifies the \textit{harvest} subchannel, denoted by $c_{h}$, for energy harvesting such that $\lambda_{c_{h}}$ is maximum. Similarly, the \textit{transmit} subchannel, denoted by $c_t$, is identified for transmission such that $\lambda_{c_t}$ is minimum. From the subchannel with maximum $\lambda_{c_{h}}$, SU can harvest more energy as $c_{h}$ is the subchannel with frequent energy/packet arrivals. Similarly, for the subchannel with lowest packet arrivals, i.e., $c_t$, a possible PU collision is less likely, compared to the other subchannels. We define the current state (idle/busy) of the selected subchannels for harvesting and transmission, i.e., $S_{c_{h}} \in \left \{0(idle),1(busy)\right \}$, $S_{c_{t}} \in \left \{0(idle),1(busy)\right \}$. Both subchannels coexist unless all $N_{c}$ subchannels have the same traffic pattern. If they all have the same traffic pattern and have the same subchannel states, the two subchannels are randomly selected. Given the subchannels $c_{h}$ and $c_{t}$, we define $\Pr(S_{c_{h}}=0)=p^{c_{h}}_{i}$ and $\Pr(S_{c_{t}}=0)=p^{c_{t}}_{i}$ as the probabilities of the corresponding subchannels for harvesting and transmission being idle. 

Note that the energy harvesting can be performed over multiple PU subchannels, and the channel gain affects largely the performance of harvesting and transmission, both of which need to be addressed further. Moreover, the energy harvesting using cooperation among multiple PU subchannels can increase the rate capacity of SU, especially under the {\it energy causality} in the RF-powered CRN considered herein. Due to the complexity of determining the optimal energy detection threshold through the traffic classification, thereby the optimal sensing policy, if consider such multi-channel cooperation, in this paper we demonstrate an improvement in the rate capacity by choosing the best subchannel either for energy harvesting or for transmission. Such multi-channel cooperation will be an interesting issue for further extending the framework developed in this paper to improve the rate capacity of SU with self-powering. 

\subsection{SU Battery Model}

The idle and busy probabilities on each PU subchannel for harvesting and transmission can be estimated through the PU traffic patterns identification. Here, we assume that SU battery is charged by the energy harvesting which stores energy into a rechargeable battery of finite capacity $B_{max} \in \mathbb{R}^{+}$ for a non-negative real number $\mathbb{R}^{+}$. As shown in Fig. 2 below, SU selects active mode or sleep mode for which the action can be denoted as $a_{t} \in \left \{ 0(sleep),1(active)\right \}$ in the slot $t$. 

\begin{figure}[h!]
\centering
\includegraphics[width=3.5in]{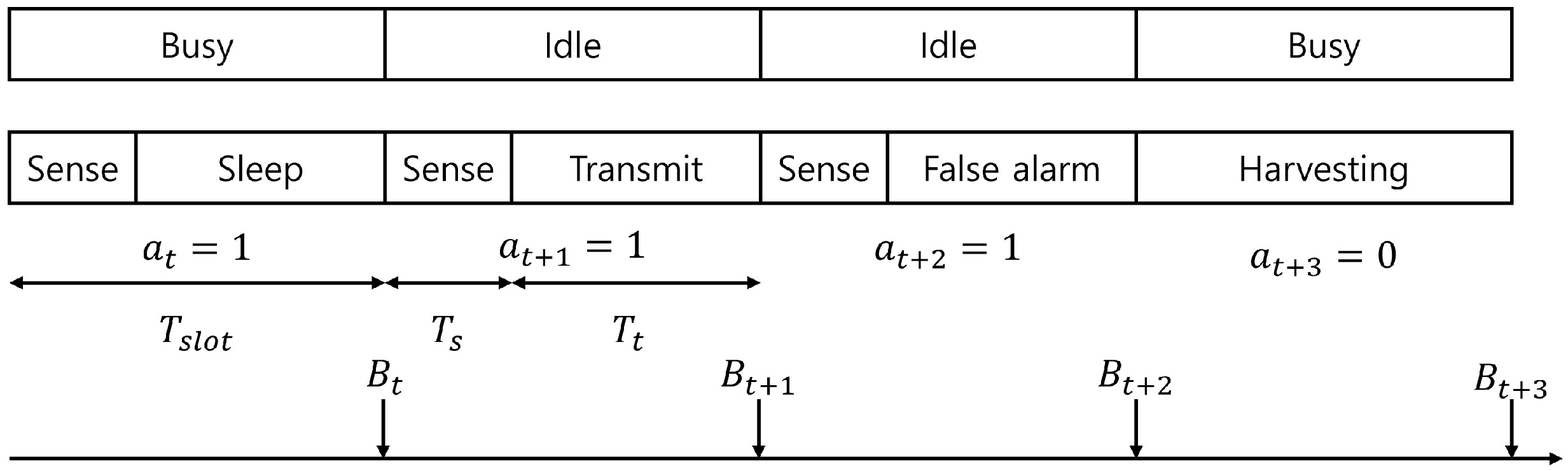}
\caption{The frame structure of SU with energy harvesting. $a_t$ represents the mode selection of SU in a slot t. }
\label{fig:2}
\end{figure}

\noindent
In Fig. 2, we denote a time slot by $T_{slot}$ where each $T_{slot}$ is divided into sensing and transmission time durations. We represent the sensing time duration with $T_{s}$, the transmission time duration with $T_{t} =T_{slot} - T_{s}$, and the residual battery level in a slot $t$ with $B_{t}$. For simplicity, we assume that SU always has data to transmit. 

SU selects the mode (active or sleep) according to the battery level observed in each time slot. Then, by comparing the residual battery level $B_{t}$ at current slot $t$ with sufficient energy for transmission, we define the action as 
\begin{equation}
\label{eqn:Ei}
a_{t}=
\begin{cases}
1, & B_{t} \geq e_{s}+e_{t}\\ 
0, & B_{t} < e_{s}+e_{t}
\end{cases}
\end{equation}
where $e_{s}$ is the energy consumption for sensing and $e_{t}$ the energy consumption for transmission. 

In active mode, SU performs spectrum sensing with the energy $e_{s} = T_{s}P_{s}$ consumed over $T_{s}$ with sensing power $P_{s}$. We represent the sensing outcome as observations, i.e., $o_{t} \in \left \{0(idle),1(busy)\right \}$. If SU observation is busy ($o_{t}=1$) after sensing the subchannel $c_{t}$, SU does not transmit data in the transmission phase ($T_{t}$) of a slot. On the contrary, if $o_{t}=0$, SU transmits data that consumes $e_{t}=T_{t}(P_{t}/\eta +P_{nc})$ in the transmission phase where $P_t$ is the SU transmit power, $P_{nc}$ the non-ideal circuit power, and $\eta\,(0<\eta\leq 1)$ the efficiency of power amplifier \cite{J14}. Note that if $o_{t}=1$ in active mode, SU can still harvest energy from the subchannel $c_{h}$. But we do not account for this as clustering can secure $c_{t}$ with higher $p^{c_{t}}_{i}$, which can be ignored here for tractable analysis. On the other hand, in sleep mode, SU turns off its transceiver (except an energy harvesting circuit) until next slot begins. 

To find the amount of residual energy in battery in each time slot, the harvested energy $E^{h}_{t}$ and energy consumption $E^{c}_{t}$ in the slot $t$ are defined as 
\begin{align}
E^{h}_{t}&=g(1-a_{t})\varphi P_{p}T_{t}S_{c_{h}} 
\\
E^{c}_{t}&=a_{t}\left [ e_{s}+(1-o_{t})e_{t} \right ] 
\end{align}
where $P_{p}$ is the PU transmit power and $g$ is the subchannel gain between PU and SU. Here we assume that SU receives the subchannel gain information via a common control subchannel. Therefore, the residual energy in battery in the next slot $t+1$ can be updated as 
\begin{equation}
B_{t+1}=\min(B_{t}-E^{c}_{t}+E^{h}_{t},B_{max}). 
\label{res}
\end{equation}

\subsection{Spectrum Sensing and Duty Cycle Model}

The sensing probability varies with specific energy detection threshold and subchannel condition associated with PU subchannel. The PU signal and noise are assumed to be modeled as independent circularly symmetric complex Gaussian (CSCG) random processes with mean zero and variances $\sigma^{2}_{p}$ and $\sigma^{2}_{w}$, respectively. Then, from \cite{W14}, the probabilities of false alarm $P_{f}(\epsilon )$ and detection $P_{d}(\epsilon)$ for the transmission subchannel $c_{t}$ are evaluated as 
\begin{align}
P_{f}(\epsilon )&=\Pr(o_{t}=1|S_{c_{t}}=0,a_{t}=1) 
\nonumber \\ 
&=Q\left [ \left ( \frac{\epsilon}{\sigma^{2}_{w}}-1 \right )\sqrt{N_{s}} \!\:\right ]
\\
P_{d}(\epsilon | g)&=\Pr(o_{t}=1|S_{c_{t}}=1,a_{t}=1,g) 
\nonumber \\ 
&=Q\left [ \left ( \frac{\epsilon}{(\frac{g\sigma^{2}_{p}}{\sigma^{2}_{w}}+1)\sigma^{2}_{w}}-1 \right )\sqrt{N_{s}} \!\:\right ] 
\end{align}
where $\epsilon \in \mathbb{R}^{+}$ is a detection threshold for the energy detector, $N_{s}$ denotes the number of samples, and $Q(x)$ Q-function.

In order to analyze a stochastic performance of SU, we need to find how often the current battery level has enough energy to transmit, i.e., the probability of SU being active $\Pr(a_{t} = 1)$. Note that we take into account the available energy to formulate the SU's action policy in (\ref{pa}) below. For this we consider the duty-cycling behavior between active mode and sleep mode to formulate $\Pr(a_{t} = 1)$. Thus, we define $\Pr(a_{t} = 1) = T_{active}/(T_{active} + T_{sleep})$, where $T_{active}$ and $T_{sleep}$ are the average times spent in active and sleep modes, respectively. Assuming that the energy harvested in sleep mode must equal the energy consumed during active mode, the probability of SU being active can be expressed as
\begin{eqnarray}
P_{a}(\epsilon | g) = Pr(a_{t} = 1) = \frac{\rho_{h}}{\rho_{h}+\rho_{c}(\epsilon | g)}
\label{pa}
\end{eqnarray}
\begin{eqnarray}
\rho_{h}=g\!\: \xi\!\: p_{o}^{c_{h}}
\end{eqnarray}
\begin{eqnarray}
\rho_{c}(\epsilon | g) = e_{s}+\Big\{\big[ 1-P_{f}(\epsilon)\big] p_{i}^{c_{t}}+\big[ 1-P_{d}(\epsilon | g)\big] p_{o}^{c_{t}}\Big\}\!\: e_{t}
\end{eqnarray}
where $\xi=\varphi P_{p}T_{t}$ and $p_{o}^{c}$ is the probability of the subchannel $c\in\{ c_{h},c_{t}\}$ being busy with $p_{o}^{c}=1-p_{i}^{c}$. In the above, we have assumed that during sleep mode, SU consumes no energy but harvests the energy with rate $\rho_{h}$. On the other hand, during active mode, SU consumes the energy at rate $\rho_{c}$. Hence, the probability of SU being active based on the duty cycle model above represents the ratio of the energy harvesting rate to the sum of the energy harvesting and energy consuming rates. 

\section{Traffic Classification}

Sensing multiple PU subchannels may be required to identify idle/busy subchannels among them, which carry $K$ traffic applications (patterns), without any prior information. This may not lead to a performance gain of SU under the {\it energy causality} in the RF-powered CRN because of undue sensing time and high energy consumption costs. To overcome this, we attempt to classify traffic patterns with observed features and select the best subchannel either for energy harvesting or for transmission accordingly. In this paper, we consider the following three traffic features with regard to application fingerprinting: packet length ($p_l$), packet interarrival time ($p_{t}$), and variance in packet length ($\Delta$), which are sufficient for acquiring the channel state information, as detailed in \cite{M12} and \cite{ME14}. Then, the packet length vector $\mathbf{P}_{\text{len}}$ is represented as $\mathbf{P}_{\text{len}}=\{p_{l_1}, p_{l_2}, \cdots, p_{l_N} \}$, where $N$ is the number of packet length samples. Similarly, $\mathbf{P}_{\text{int}}$ and $\boldsymbol{\Delta}$ are represented as $\mathbf{P}_{\text{int}}=\{p_{t_1}, p_{t_2}, \cdots, p_{t_N}\}$ and $\boldsymbol{\Delta}=\{[var(p_l)]_{W_1}=0, [var(p_l)]_{W_2}, \cdots, [var(p_l)]_{W_{N}}\}$, respectively. Here, $[var(p_l)]_{W_n}$ is the {\it temporal} variance of packet length in a window $W_n$ of size $n$, spanning over $[1,\ldots,n]$(st)th observations. 

With regard to some application backgrounds, we provide more detailed description of each feature as follows: 
\begin{enumerate}
\item Packet Length: Packet lengths for different traffic payloads are likely to
be different. For example, the UDP packet size is longer, the gaming packet size may vary depending on game dynamics, and the VoIP packet has smaller packet lengths to minimize jitter. So we use the packet length as one feature point for identifying different traffic applications.
\item Packets Interarrival Time: Packet interarrival times for different applications also vary depending upon the requirements of applications. For example, in VoIP, the inter-arrival time is small to avoid annoying effects caused by jitter.
\item Variance in Packet Length: The packet lengths may change in every connection. For example, by investigating real wireless traces in \cite{O1}, \cite{O2}, for gaming data, we have observed that packet lengths vary significantly during the communication. 
\end{enumerate}
Let $\mathbf{x}_n$ be the feature vector of the $n$th training feature point, given by
\begin{align}\label{eqn:obsr1}
\mathbf{x}_n = \!~\big[p_{l_n}, p_{t_n}, [var(p_l)]_{W_n} \big], \nonumber \quad \mathbf{X} = \big[\mathbf{x}_1,\cdots,\mathbf{x}_N\big]^T.
\end{align}
The matrix of observations $\mathbf{X}$ is observed by SU by examining the packet header of PU. 

\begin{figure}[h!]
\centering
\includegraphics[width=2.2in]{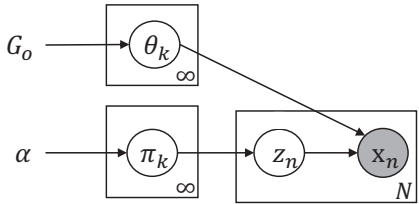}
\caption{Graphical model of an DPMM. Observations are represented by the shaded node.
Nodes are random variables, edges are dependencies, and plates are replications.}
\label{fig:SM_clusters}
\end{figure}

In practice, we do not know how many traffic applications are active in the PU network, and therefore, we would like to learn it from the data observed. The Dirichlet processes can be employed to have a mixture model (DPMM) with infinite components (applications) which can be viewed as taking the limit of the finite mixture model for $K \rightarrow \infty$.
The theoretical model of data generation for the DPMM is
\begin{equation}
\begin{aligned}
\mathbf{x}_n \quad &\sim \quad p(\theta_k) \\
\theta_k \quad &\sim \quad G \\
G \quad &\sim \quad  DP(\alpha, G_o)
\end{aligned}
\label{eqn:dp}
\end{equation}
where $G=\sum_{k=1}^{\infty}\pi_k\delta_{\theta_k^*} \sim DP (\alpha,G_o)$, $G_o$ is the base distribution and $\delta_{\theta_k^*}$ is used as a short notation for $\delta(\theta=\theta_k^*)$ which is a delta function that takes 1 if $\theta=\theta_k^*$ and 0 otherwise. $\theta_k$ are the cluster parameters sampled from $G$ where $k \in \{1,2,\dots\}$. The generative distribution $p(\theta_k)$ is governed by the cluster parameters $\theta_k$ and is used to generate the observations $\mathbf{x}_n$. Then, the multimodal probability distribution can be defined as $p(\mathbf{x}_n)=\sum_{k=1}^{\infty}\pi_k\;p(\cdot|\delta_{\theta_k^*})$ which is called the mixture distribution with mixing weights $\pi_k$ and mixing components $p(\cdot|\delta_{\theta_k^*})$. The graphical model of the DPMM is shown in Fig. \ref{fig:SM_clusters}. $\alpha$ is the scalar hyperparameter of the DPMM and affects the number of clusters obtained. $Z_n$ is the cluster assignment variable such that the feature point $\mathbf{x}_n$ belongs to the $k$th cluster. Larger the value of $\alpha$, the more clusters, while smaller the value of $\alpha$, the fewer clusters. Note that the value of $\alpha$ indicates the strength of belief in $G_o$. A large value means that most of the samples will be distinct and have values concentrated on $G_o$. 

\subsection{DPMM Representation}

The model defined above is a theoretical one. In order to realize DPMM models, Stick-breaking process, Chinese restaurant process, and Polya-urn process are used. Here we represent the DPMM in (\ref{eqn:dp}) using the Stick-breaking process. Consider two infinite collections of independent random variables $V_k \overset{i.i.d.}{\sim} \text{Beta}(1,\alpha)$ and $\theta_k^* \overset{i.i.d.}{\sim} G_o$ for $k=\{1,2,\dots\}$. The Stick-breaking process of $G$ is given by
\begin{align}\label{eqn:stickbreak}
\pi_k &= V_k\prod_{j=1}^{k-1}(1-V_j) \\
G &= \sum_{k=1}^{\infty}\pi_k \delta_{\theta_k^*} 
\end{align}
where $\mathbf{V}=\{V_{1},V_{2},\ldots\}$ with $V_{0}=0$. The mixing weights $\{\pi_k\}$ are given by breaking a stick of unit length into infinitely small segments. In the DPMM, the vector $\boldsymbol{\pi}$ represents the infinite vector of mixing weights and $\{\theta_1^*, \theta_2^*, \dots\}$ are the atoms which correspond to mixing components. Since $Z_n$ is the cluster assignment random variable to the feature point $\mathbf{x}_n$, the data for the DPMM is generated as
\begin{enumerate}
\item{Draw $V_k|\alpha \overset{i.i.d.}{\sim} \text{Beta}(1,\alpha)$, \quad $k \in \{ 1,2, \dots\}$}.
\item{Draw $\theta_k^* \overset{i.i.d.}{\sim} G_0$, \quad $k \in \{ 1,2, \dots\}$}.
\item{For the feature point $\mathbf{x}_n$, do:}
\begin{itemize}
\item[(a)]{Draw $Z_n|\{ v_1,v_2,\dots \} \; \sim \; \text{Multinomial}(\boldsymbol{\pi})$}.
\item[(b)]{Draw $\mathbf{x}_n|z_n \; \sim \; p(\mathbf{x}_n|\theta_{z_n}^*)$}.
\end{itemize}
\end{enumerate}
Here we restrict ourselves to the DPMM for which the observable data are drawn from Normal distribution and where the base distribution for the DP is the corresponding conjugate distribution.

\subsection{Inference for DPMM}

Since the Dirichlet processes are nonparametric, we cannot use the EM algorithm to estimate the random variables $\{Z_n\}$ (which store the cluster assignments) for our DPMM model in (\ref{eqn:dp}) due to the fact that EM is generally used for inference in a mixture model, but here $G$ is nonparametric, making EM difficult. Hence, in order to estimate these assignment variables in the paradigm of Bayesian nonparametrics, there exist two candidate approaches for inferences: First, a sampling-based approach uses Markov chain Monte Carlo (MCMC) to cluster the traffic patterns $\mathbf{X}$. The MCMC based sampling (also known as Gibbs sampling) approach is more accurate in classifying feature points. 
The second approach is based on variational methods which convert inference problems into optimization problems \cite{K16}. The main idea that governs variational inference is that it formulates the computation of marginal or conditional probability in terms of an optimization problem that depends on a number of free parameters, i.e., variational parameters. 
We discuss both approaches in the following subsections with their detailed comparison in Table \ref{table1}, which are confirmed with the simulation results. 

\subsubsection{Collapsed Gibbs Sampling for Traffic Classification}

In our model, the data follow multivariate normal distribution, $\mathcal{N}_k(\vec{\mu}_k,\Sigma_k)$, where the parameters are 3-dimensional mean vector and covariance matrix. The conjugate distributions for mean vector $\vec{\mu}_k$ and covariance matrix $\Sigma_k$ are given by $\vec{\mu}_k \sim \mathcal{N}(\vec{\mu}_0,\Sigma_k/\kappa_0)$ and $\Sigma_k \sim \text{Inverse-Wishart}_{\nu_0} (\Lambda_0^{-1})$, respectively.
The Dirichlet hyperparameters, here symmetric $\alpha/K$, encodes our beliefs about how uniform/skewed the class mixture weights are. The parameters to the Normal times Inverse-Wishart prior, $\Lambda_0^{-1}, \nu_0, \kappa_0$ imply our prior knowledge regarding the shape and position of the mixture densities. For instance, $\vec{\mu}_0$ specifies where we believe the mean of the mixture densities are expected to be, where $\kappa_0$ is the number of pseudo-observations we are willing to ascribe to our belief. The hyper-parameters $\Lambda_0^{-1},\nu_0$ behave similarly for the mixture density covariance.

Collapsed Gibbs sampler requires to select the base distribution $G_o$ which is a conjugate prior of the generative distribution $p(\mathbf{x}_n|\theta_{z_n}^*)$, in order to solve analytically and be able to sample directly from $p(Z_n|Z_{-n},\mathbf{X})$. The posterior distribution under our model is
\begin{eqnarray}\label{eqn:posterior}
P(Z,\Theta,\boldsymbol{\pi}, \alpha) & \hspace{-0.1in} \propto \hspace{-0.1in} & P(\mathbf{X}|Z,\Theta)P(\Theta|G_0) \left(\prod_{n=1}^{N}P(z_n|\boldsymbol{\pi})\right) \nonumber \\ 
& & \cdot\: P(\boldsymbol{\pi}|\alpha)P(\alpha). 
\end{eqnarray}
By integrating-out certain parameters, the posterior distribution is given by \cite{Wood} 
\begin{equation}\label{eqn:posterior}
P(z_i=k|Z_{-n},\mathbf{X},\Theta,\boldsymbol{\pi},\alpha) \propto P(\mathbf{x}_n|z_n,\Theta)P(z_n|Z_{-n},\alpha). 
\end{equation}
For the first term in the above equation, we use multivariate Student-t distribution, i.e., $t_{\nu_n-2}\Big[\vec{\mu}_n,\frac{\Lambda_n(\kappa_n+1)}{\kappa_n(\nu_n-2)}\Big]$, since we chose Inverse-Wishart as conjugate prior for $\Sigma_k$ and Normal distribution for $\vec{\mu}_k$, where the second term is called Chinese restaurant process and is given by 
\begin{equation}\label{eqn:crp}
P(z_n=k|Z_{-n})=\left\{
\begin{array} {l l}
\displaystyle \frac{m_k}{n-1+\alpha}, \text{  if } k \leq K_+, \\
\displaystyle \frac{\alpha}{n-1+\alpha}, \text{  if } k > K_+
\end{array} \right.
\end{equation}
where $Z_{-n}=Z/z_{n}$, $K^+$ is the number of classes containing at least one data point, and $m_k=\sum_{n=1}^{N}I(z_i=k)$ is the number of data points in class $k$. 

The steps involved in collapsed Gibbs sampling are enumerated below as:
\begin{enumerate}
\item{Initialize the cluster assignments $\{ z_n\}$ randomly}.
\item{Repeat until convergence:}
\begin{itemize}
\item[(a)]{Randomly select $\mathbf{x}_n$}.
\item[(b)]{Fix all other $z_n$ for every $n \neq n$: $Z_{-n}$}.
\item[(c)]{Sample $z_n \sim p(Z_{n}|Z_{-n},\mathbf{X})$ from (\ref{eqn:crp})}.
\item[(d)]{If $z_n > K$, then update $K=K+1$}.
\end{itemize}
\end{enumerate}

An overall framework for PU traffic pattern classification and application-specific optimization is shown in Fig. \ref{fig:framework}.

\begin{figure}[h!]
\centering
\includegraphics[width=3in, height=5in]{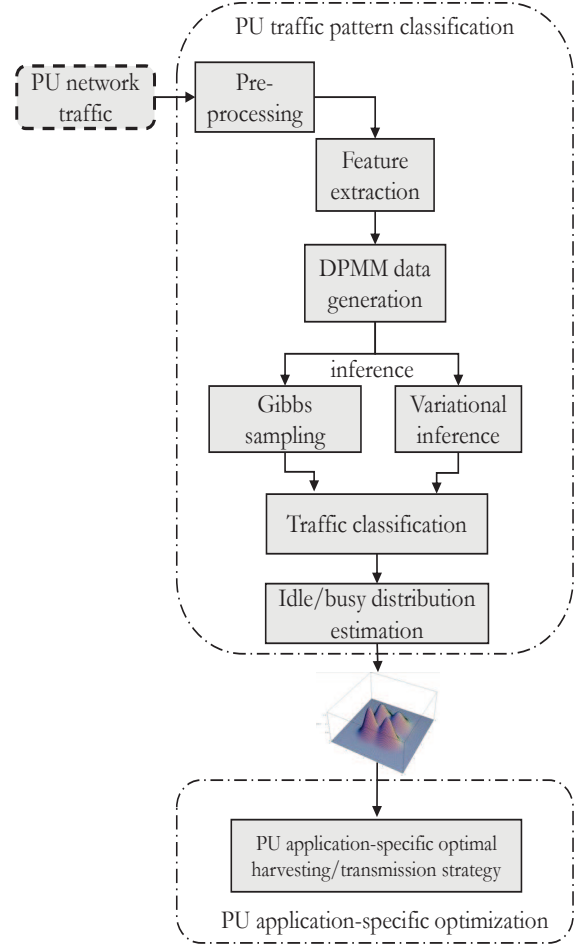}
\caption{A flow diagram for classification of PU traffic patterns.}
\label{fig:framework}
\end{figure}

\subsubsection{Variational Bayes Inference for Traffic Classification}

The main idea that governs variational inference is that it formulates the computation of marginal or conditional probability in terms of an optimization problem that depends on the number of free parameters, i.e., variational parameters. In other words, we choose a family of distributions over the latent variables with its own set of variational parameters $\nu$, i.e., $q(W_{1:M}|\nu)$. We are interested in finding the setting of the parameters that makes our approximation $q$ closest to the posterior distribution. Consequently, we can use $q$ with the fitted parameters in place of the posterior.

Here we assume $N$ observations, $\mathbf{X}=\{\mathbf{x}_{n}\}_{n=1}^{N}$ and $M$ latent variables, i.e., $\mathbf{W}=\{W_{m}\}_{m=1}^{M}$. The fixed parameters $\mathcal{H}$ could be the parametrization of the distribution over the observations or the latent variables. With the given notations, the posterior distribution under Bayesian paradigm is
\begin{equation}\label{eqn:bayesian}
p(\mathbf{W}|\mathbf{X},\mathcal{H})=\frac{p(\mathbf{W},\mathbf{X}|\mathcal{H})}{\int_{\mathbf{W}}p(\mathbf{W},\mathbf{X}|\mathcal{H})}.
\end{equation}
The posterior density in (\ref{eqn:bayesian}) is in an intractable form (often involving integrals) which cannot easily be solved analytically. Thus we rely on an approximate inference method, i.e., variational Bayes inference. Our goal is to find an approximation of the posterior distribution $p(\mathbf{W}|\mathbf{X},\mathcal{H})$ as well as the model evidence $p(\mathbf{X}|\mathcal{H})$. We introduce a distribution $q(\mathbf{W})$ defined over the latent variables and observe that for any choice of $q(\mathbf{W})$, the following decomposition holds
\begin{equation}\label{eqn:posterior}
\log p(\mathbf{X}|\mathcal{H}) = \mathcal{L}(q)+\text{KL}(q||p)
\end{equation}
where
\begin{align}
\mathcal{L}(q) &= \int q(\mathbf{W}) \log \Bigg\{\frac{p(\mathbf{X},\mathbf{W}|\mathcal{H})}{q(\mathbf{W})}\Bigg\}d\mathbf{W} \label{eqn:lb} \\
\text{KL}(q||p) &= -\int q(\mathbf{W}) \log \Bigg\{\frac{p(\mathbf{W}|\mathbf{X},\mathcal{H})}{q(\mathbf{W})}\Bigg\}d\mathbf{W}. \label{eqn:kl}
\end{align}
From (\ref{eqn:kl}), we see that $\text{KL}(q||p)$ is the Kullback-Leibler divergence between $q(\mathbf{W})$ and the posterior distribution $p(\mathbf{W}|\mathbf{X},\mathcal{H})$, where the KL divergence satisfies $\text{KL}(q||p)\geq 0$ with equality if and only if $q(\mathbf{W})=p(\mathbf{W}|\mathbf{X},\mathcal{H})$, i.e., $q(\cdot)$ is the true posterior.
It follows from (\ref{eqn:posterior}) that $\mathcal{L}(q) \leq \log p(\mathbf{X}|\theta)$, in other words, $\mathcal{L}(q)$ is a lower bound on $p(\mathbf{X}|\theta)$. Therefore, we can maximize the lower bound $\mathcal{L}(q)$ by optimizing with respect to $q(\mathbf{W})$, which is equivalent to minimizing the KL divergence. We consider a restricted family of distributions $q(\mathbf{W})$ and then seek the member from this family for which the KL divergence is minimized.

The latent variables\footnote{We use the variational inference to pick a family of distributions over the latent variables $\mathbf{W}$ with its own variational parameters $\bm{\nu}$, and then set $\bm{\nu}$ to render ${q(\cdot|\bm{\nu})}$ close to the posterior of interest.} for DPMM are the stick lengths, atoms, and cluster assignment variables: $\mathbf{W}=\{\mathbf{V}, \bm{\theta^*}, \mathbf{Z}\}$ and the hyperparameters are the scaling parameter $\alpha$ and the parameter for the conjugate base distribution $G_0$, $\mathcal{H}=\{\alpha, G_0\}$. Thus, the marginal distribution of the data (evidence) lower bound is evaluated as 
\begin{align}\label{eqn:optimize2}
\log p(\mathbf{X}|\mathcal{H}) \geq &\:\mathbf{E}_q[\log p(\mathbf{V}|\mathcal{H})]+ \mathbf{E}_q[\log p(\bm{\theta^*}|G_0)] \nonumber \\
&+ \sum \limits_{n=1}^N (\mathbf{E}_q[\log p(Z_n|\mathbf{V})]+\mathbf{E}_q[\log p(\mathbf{x}_n|Z_n)]) \nonumber \\
&- \mathbf{E}_q[\log q(\mathbf{V},\bm{\theta^*},\mathbf{Z})].
\end{align}
To maximize the bound, we must find a family of variational distributions that approximate the distributions of infinite-dimensional random measure $G$, where $G$ is expressed in terms of $\mathbf{V}=\{V_1, V_2, \dots\}$ and $\bm{\theta^*}=\{\theta_1^*, \theta_2^*,\dots \}$. The factorized family of variational distributions for mean-field inference can be expressed as
\begin{align}\label{eqn:vparameter}
q(\mathbf{v},\bm{\theta^*},\mathbf{z})=\prod \limits_{k=1}^{K-1}q(v_k|\zeta_k)\prod \limits_{k=1}^{K}q(\theta_k^*|\varepsilon_k)\prod \limits_{n=1}^N q(z_n|\vartheta_n)
\end{align}
where $q(v_k|\zeta_k)$ are beta distributions parameterized with $\zeta_k$, $q(\theta_k^*|\varepsilon_k)$ are exponential family distributions with natural parameter $\varepsilon_k$, and $q(z_n|\vartheta_n)$ are multinomial distributions with parameter $\sigma_n$. The latent variables in Fig. \ref{fig:SM_clusters} are governed by the same distribution, but, following the fully factorized variational variables in the mean-field variational approximation, given there is an independent distribution for each variable. Thus, the variational parameters are defined by
\begin{align}\label{eqn:vparameters}
\bm{\nu} = \{\zeta_1,\dots,\zeta_{K-1}, \varepsilon_1,\dots,\varepsilon_K,\vartheta_1,\dots,\vartheta_N \}.
\end{align}
Since multiple parameter options exist for each latent variable under the variational distribution, we need to optimize the bounds in (\ref{eqn:optimize2}) based on $\bm{\nu}$ above. 

Optimizing by employing the coordinate ascent algorithm from \cite{Blei}, we optimize the  bounds in (\ref{eqn:optimize2}) with respect to the variational parameters $\bm{\nu}$ in (\ref{eqn:vparameters}). From the coordinate ascent algorithm, we acquire the following statistics: the number of traffic patterns ($K$) and their corresponding parameters $\mathbf{\theta}_k^*$. These statistics are utilized by SU for optimal harvesting and transmission strategy in the sequel. 

\subsection{Comparison of Inference Algorithms}

Table \ref{table1} presents the detailed comparison of the two algorithms. The collapsed Gibbs sampling is more accurate than the variational Bayes inference, but the former has some limitations such as slow convergence \cite{Y16} and difficult to diagnose convergence. We confirm the comparison results, in terms of accuracy and time complexity, through simulations. The variational inference is biased (underfitting), whereas the Gibbs sampling's bias diminishes as the number of runs for the Markov chain increases. For non-conjugate prior distributions, the latter is preferred which is much faster than the former. The latter is deterministic, which means that we always obtain the same optimal value, given the same starting value and an objective function without huge local optima problems. The latter is quicker since it approximates the posterior using optimization with free variables.

\begin{table}[h]
\caption{Comparison of inference algorithms.}
\label{tab:results}
\begin{tabular}{cccl}
\toprule
Item & Gibbs sampling & Variational inference \\
\midrule
Speed & Slower& Faster \\
Biasness & Not biased & Biased \\
Computational requirement & Higher & Lower \\
For non-conjugate dist. & Not preferred & Preferred \\ 
Deterministic & No & Yes \\
Accuracy & Higher & Lower \\
Convergence diagnoses & Difficult & Easy \\
Converge to true posterior & Yes & No \\
Inference approach & MCMC sampling & Optimization \\
Approximates & Integrals & Data distribution \\
\bottomrule
\end{tabular}\label{table1}
\end{table}

\section{Optimal Energy Detection Threshold Estimation}

After selecting the subchannels for harvesting/transmission, SU uses $c_{h}$ to harvest energy and $c_{t}$ to transmit information. Specifically, the goal is to adjust the detection threshold $\epsilon$ of the SU energy detector under the energy causality and PU collision constraints. For instance, increasing the detection threshold results in frequent SU transmissions, as a result, the increased probability of accessing the occupied spectrum, which may result in collision with PU transmission. Also, it incurs excessive energy usage which is not good for energy-constrained SU. On the other hand, lowering the detection threshold $\epsilon$ reduces unnecessary sensing and transmission actions and consequently saves energy for future transmission. However, it decreases the probability of accessing the unoccupied spectrum, causing the achievable rate loss to SU. Thus, it is of vital importance to finely tune the SU energy detection threshold $\epsilon$ for optimal sensing subject to the design constraints, i.e., PU protection and energy causality.

\subsection{Problem Formulation}

The achievable rate capacity of SU is defined as $\boldsymbol{C} = W\log _{2}(1+SNR)$ for the signal-to-noise ratio ($SNR$) of the SU link when the PU transmit subchannel $c_{t}$ of bandwidth $W$ is idle. Then the SU rate capacity is expressed as 
\begin{align}
R\left ( \epsilon | g \right ) &=\frac{T_{t}}{T_{slot}}\, \boldsymbol{C}\Pr(a_{t}=1,o_{t}=0,S_{c_{t}}=0  | g)\hspace{-0.3in}
\nonumber \\ 
&= \frac{T_{t}}{T_{slot}}\, \boldsymbol{C} \big[ 1-P_{f}\left ( \epsilon \right ) \big] P_{a}(\epsilon | g)\!\: p^{c_{t}}_{i}. 
\label{rate}
\end{align}
Here we have used the relation $\Pr(a_{t}=1,o_{t}=0,S_{c_{t}}=0  | g)=\Pr(o_{t}=0|a_{t}=1,S_{c_{t}}=0)\Pr(a_{t}=1|S_{c_{t}}=0,g)\Pr(S_{c_{t}}=0)$ based on Bays' rule. The rate capacity above converges to a specific value due to the Q-function characteristic of $P_{f}$ and $P_{a}$ as the energy detection threshold $\epsilon$ increases. 

In this case, however, the PU performance is degraded due to the collision of PU and SU transmissions as the latter becomes more aggressive. Therefore, we should put some constraint on the collision probability, which is evaluated as 
\begin{align}
P_{c}(\epsilon  | g)&=\Pr(a_{t}=1,o_{t}=0|S_{c_{t}}=1,g) \hspace{-0.37in}
\nonumber \\ 
&=\big[ 1-P_{d}(\epsilon | g) \big] P_{a}(\epsilon | g ). 
\label{con}
\end{align}
Now we can formulate an optimization problem to find an appropriate value of $\epsilon$, which leads to an optimal spectrum sensing policy for maximizing the SU rate capacity as 
\begin{eqnarray}
\max_{\epsilon }~ R\left ( \epsilon | g \right ) ~~\textrm{s.t.}~~ P_{c}(\epsilon | g)\leq \overline{P}_{c} 
\label{prob}
\end{eqnarray}
where $\overline{P}_{c}$ is the target probability of collision with which PU can be sufficiently protected.

\subsection{Distinctions of RF-Powered CRNs from General CRNs}

As defined in (\ref{pa}), the probability of action for SU in RF-powered CRNs is a function of its battery level, given the {\it energy causality} (i.e., self-powering) is applied to SU for joint opportunistic energy harvesting and transmission, unlike general CRNs. Therefore, as formulated in (\ref{rate}) - (\ref{prob}), the optimal sensing policy in the RF-powered CRNs should take into account the energy state, unlike that in the general CRNs. For this, we have acquired the channel state information through the traffic classification developed in this paper, which is a crucial factor for determining the optimal sensing policy in the RF-powered CRNs.

\subsection{Optimal Energy Detection Threshold}

To find an optimal energy detection threshold $\epsilon$ in (\ref{prob}), we define an objective function as $O(\epsilon | g)=\big[ 1-P_{f}\left ( \epsilon \right )\big] P_{a}(\epsilon | g)$ which is affected only by $\epsilon$ in (\ref{rate}), and then reformulate the optimization problem above as 
\begin{eqnarray}
\max_{\epsilon }~ O\left ( \epsilon  | g\right ) ~~\textrm{s.t.}~~ P_{c}(\epsilon | g)\leq \overline{P}_{c}. 
\label{obj}
\end{eqnarray}
In addition, we define the constraint function as $\Phi (\epsilon, \overline{P}_{c})=P_{c}(\epsilon | g)- \overline{P}_{c}$ to obtain a proper threshold range of $\left [ 0,\epsilon_{c} \right ]$, where $\epsilon_{c}$ is the solution 
of the following equation: 
\begin{eqnarray}
\Phi (\epsilon_{c}, \overline{P}_{c}) = 0. 
\label{sol}
\end{eqnarray}

We will find the following Propositions 1 and 2 useful in obtaining the solution $\epsilon_{c}$ of (\ref{sol}).
\begin{proposition}
The collision probability $P_{c}(\epsilon | g)$ in (\ref{con}) and the object function $O(\epsilon | g)$ in (\ref{obj}) can be classified into two types of function $f(\epsilon|g)$ as follows:
\begin{enumerate}
\item $f(\epsilon_{1}|g)<f(\epsilon_{2}|g)$ ~~for $\epsilon_{1}<\epsilon_{2} \leq \epsilon_{m}$, \\
$f(\epsilon_{3}|g)>f(\epsilon_{4}|g)$ ~~for $\epsilon_{m}\leq\epsilon_{3} < \epsilon_{4}$, \\
$\lim_{\epsilon\rightarrow \infty}f(\epsilon|g) =\gamma_{1}$
~~where
$\epsilon_{m}=\arg\max_{\epsilon}f(\epsilon|g)$.
\item $f(\epsilon_{1}|g)<f(\epsilon_{2}|g)$ ~~for $\epsilon_{1}<\epsilon_{2}$, \\
$\lim_{\epsilon\rightarrow \infty}f(\epsilon|g) =\gamma_{1}$
\end{enumerate}
where
\begin{eqnarray}
\gamma_{1}=\frac{\rho_{h}}{\rho_{h}+e_{s}+e_{t}}. 
\label{gam}
\end{eqnarray}
\begin{proof}
See Appendix A.
\end{proof}
\end{proposition}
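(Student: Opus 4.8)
The plan is to treat both functions uniformly as a product $f(\epsilon|g)=h(\epsilon)\,P_{a}(\epsilon|g)$, where the ``gain'' factor is $h=1-P_{f}$ for the objective $O$ and $h=1-P_{d}$ for the collision probability $P_{c}$. Writing $A:=1-P_{f}(\epsilon)$ and $D:=1-P_{d}(\epsilon|g)$ and using $1-Q(x)=Q(-x)=\Phi(x)$ (the Gaussian CDF), both $A$ and $D$ are strictly increasing in $\epsilon$ with $A,D\to 1$ as $\epsilon\to\infty$. First I would settle the common asymptote: as $\epsilon\to\infty$ the $Q$-arguments tend to $+\infty$, so $P_{f},P_{d}\to 0$, hence $\rho_{c}\to e_{s}+e_{t}$ and $P_{a}\to\rho_{h}/(\rho_{h}+e_{s}+e_{t})=\gamma_{1}$. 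Since $h\to 1$ in both cases, $O,P_{c}\to\gamma_{1}$, giving the limit in (\ref{gam}) shared by the two types.

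Next I would reduce the qualitative shape to the sign of a single derivative. Differentiating $f=hP_{a}$ with $\rho_{c}=e_{s}+(Ap_{i}^{c_{t}}+Dp_{o}^{c_{t}})e_{t}$, a direct simplification yields $\mathrm{sign}(O')=\mathrm{sign}\{A'(\rho_{h}+e_{s})+e_{t}p_{o}^{c_{t}}(A'D-AD')\}$ and $\mathrm{sign}(P_{c}')=\mathrm{sign}\{D'(\rho_{h}+e_{s})-e_{t}p_{i}^{c_{t}}(A'D-AD')\}$, where $A',D'>0$. At $\epsilon=0$ one has $A=D$ (both $Q$-arguments equal $-\sqrt{N_{s}}$), so the cross term $A'D-AD'$ vanishes there and is small nearby; since $A'(\rho_{h}+e_{s})>0$ and $D'(\rho_{h}+e_{s})>0$, both $O'$ and $P_{c}'$ are positive for small $\epsilon$, establishing the increasing left branch common to both types.

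The engine of the argument is the cross term $A'D-AD'=D^{2}(A/D)'$. Because $x_{f}>x_{d}$ for $\epsilon>0$ (the false-alarm argument exceeds the detection argument when $\beta=g\sigma_{p}^{2}/\sigma_{w}^{2}+1>1$), the ratio $A/D=\Phi(x_{f})/\Phi(x_{d})$ equals $1$ at $\epsilon=0$, exceeds $1$ for $\epsilon>0$, and returns to $1$ as $\epsilon\to\infty$; hence $A'D-AD'$ is first positive, then negative. Combined with the tail behaviour—where $D'$ decays strictly slower than $A'$ (again by $\beta>1$), so that $A'D-AD'\approx A'-D'<0$ dominates—this gives the two outcomes. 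For the objective, the negative cross term eventually overpowers the positive baseline $A'(\rho_{h}+e_{s})$, driving $O'$ from positive to negative: $O$ is unimodal with a single peak $\epsilon_{m}$ and approaches $\gamma_{1}$ from above (Type 1). For the collision probability, once $A'D-AD'<0$ the subtracted term turns into a positive contribution, so $P_{c}'$ remains positive and $P_{c}$ increases monotonically to $\gamma_{1}$ from below (Type 2).

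The hard part is proving that each derivative changes sign \emph{at most once}, ruling out spurious oscillation in the intermediate regime. I would obtain this single-crossing property from the log-concavity of the Gaussian CDF—equivalently, the strict monotonicity of the inverse Mills ratio $\lambda(x)=\phi(x)/\Phi(x)$—which makes $A'/A=\lambda(x_{f})x_{f}'$ decreasing and pins down a unique stationary point of $A/D$. The delicate step is controlling the non-monotone factor $D'=\phi(x_{d})x_{d}'$ against the decreasing factor $A'/A$ in the rewritten condition $\mathrm{sign}(O')=\mathrm{sign}\{(A'/A)(\rho_{h}+e_{s}+e_{t}p_{o}^{c_{t}}D)-e_{t}p_{o}^{c_{t}}D'\}$ and its $P_{c}$ analogue, so as to certify a single crossing; this is where the bulk of the technical work lies and where the distinction between the two function types is ultimately decided.
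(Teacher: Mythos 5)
Your setup is sound: the limit computation giving $\gamma_{1}$ is correct, and your derivative identities $\mathrm{sign}(O')=\mathrm{sign}\{A'(\rho_{h}+e_{s})+e_{t}p_{o}^{c_{t}}(A'D-AD')\}$ and $\mathrm{sign}(P_{c}')=\mathrm{sign}\{D'(\rho_{h}+e_{s})-e_{t}p_{i}^{c_{t}}(A'D-AD')\}$ check out. But the proposal stops exactly where the proposition begins. The entire content of the statement is the single-crossing property --- that the derivative changes sign at most once, and only from positive to negative --- and you explicitly defer that step (``this is where the bulk of the technical work lies''). Log-concavity of $\Phi$ gives you that $A'/A$ and $D'/D$ are decreasing, but you still must show that the specific combination $(A'/A)(\rho_{h}+e_{s}+e_{t}p_{o}^{c_{t}}D)-e_{t}p_{o}^{c_{t}}D'$ (and its $P_{c}$ analogue) crosses zero at most once; in particular, observing that $A/D$ starts at $1$, rises, and returns to $1$ does not by itself imply that $A'D-AD'$ changes sign exactly once. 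As written this is a plan for a proof, not a proof.

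There is also a substantive over-claim: you assign $O$ permanently to Type 1 (unimodal) and $P_{c}$ permanently to Type 2 (monotone increasing to $\gamma_{1}$ from below). That reading is inconsistent with the rest of the paper. Proposition 2 allows $P_{c}(\epsilon|g)=\overline{P}_{c}$ to have \emph{two} solutions when $\gamma_{1}\leq\overline{P}_{c}$, which is possible only if $P_{c}$ is unimodal in that regime (a monotone function bounded by $\gamma_{1}\leq\overline{P}_{c}$ would yield no solution); likewise Algorithm 1 tests $\nabla O(\epsilon_{c})>0$ precisely because $O$ may be of either type. The proposition asserts that each function falls into one of the two types depending on parameters, not that each has a fixed type. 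For comparison, the paper's own proof takes a different decomposition: it writes $f=A\cdot B$ with $B=P_{a}$ decreasing, posits a crossing point $\epsilon_{e}$ where $A=B$ together with the slope condition $A'>-B'$, proves $f'>0$ for $\epsilon<\epsilon_{e}$, and then shows $f'$ cannot return to positive once it turns negative beyond $\epsilon_{e}$ --- obtaining single-crossing from the $A$-versus-$B$ crossing rather than from the $A$-versus-$D$ cross term you work with.
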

\begin{proposition}
The constraint function in (\ref{sol}) yields a unique solution $\epsilon_{c}$ for $\gamma_{1}>\overline{P}_{c}$. If $\gamma_{1}\leq\overline{P}_{c}$, we have two or no solution. 
\end{proposition}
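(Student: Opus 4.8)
The plan is to read off the number of roots of the constraint function $\Phi(\epsilon,\overline{P}_c)=P_c(\epsilon|g)-\overline{P}_c$ directly from the shape of $P_c(\epsilon|g)$ supplied by Proposition 1, together with its two boundary values. First I would pin down the endpoints of $P_c(\epsilon|g)$ on $[0,\infty)$. At $\epsilon=0$ the detection probability $P_d(0|g)=Q(-\sqrt{N_s})$ is close to one, so $1-P_d(0|g)\approx 0$ and hence $P_c(0|g)$ in (\ref{con}) sits essentially at its floor, well below any meaningful design target $\overline{P}_c$. At the other end, Proposition 1 already gives $\lim_{\epsilon\to\infty}P_c(\epsilon|g)=\gamma_1$ with $\gamma_1$ as in (\ref{gam}); this is the pivotal quantity against which $\overline{P}_c$ is compared.

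Next I would split on the two function types of Proposition 1 and, within each, on the sign of $\gamma_1-\overline{P}_c$, counting crossings by the intermediate value theorem plus the stated monotonicity. For a Type-2 function (strictly increasing, approaching the asymptote $\gamma_1$ from below), $P_c$ ranges over $(P_c(0|g),\gamma_1)$, so the level $\overline{P}_c$ is met exactly once when $\overline{P}_c<\gamma_1$ and never when $\overline{P}_c\geq\gamma_1$. For a Type-1 function (rise to a maximum $P_c(\epsilon_m|g)$ at $\epsilon_m$, then fall to $\gamma_1$), the rising branch sweeps $(P_c(0|g),P_c(\epsilon_m|g))$ and the falling branch sweeps $(\gamma_1,P_c(\epsilon_m|g))$; since each branch is strictly monotone, each contributes at most one root. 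Note that the fall-to-$\gamma_1$ description forces $P_c(\epsilon_m|g)>\gamma_1$, so the two branches straddle $\gamma_1$.

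Assembling the cases yields the claim. When $\gamma_1>\overline{P}_c$: a Type-2 function crosses once on its single branch, while a Type-1 function crosses once on the rising branch (whose floor $P_c(0|g)$ lies below $\overline{P}_c$), yet its falling branch stays strictly above $\gamma_1>\overline{P}_c$ and so never meets $\overline{P}_c$; either way the root $\epsilon_c$ of (\ref{sol}) is unique. When $\gamma_1\leq\overline{P}_c$: a Type-2 function remains strictly below $\gamma_1\le\overline{P}_c$ and gives no root, whereas a Type-1 function gives two roots when $\gamma_1<\overline{P}_c<P_c(\epsilon_m|g)$ (one on each branch) and no root when $\overline{P}_c>P_c(\epsilon_m|g)$, hence two or no solution.

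The step I expect to be the main obstacle is the Type-1 two-root count, since it is the only place where the comparison is not merely against $\gamma_1$ but against the interior maximum $P_c(\epsilon_m|g)$; I must invoke $P_c(\epsilon_m|g)>\gamma_1$ (implicit in the rise-then-fall-to-$\gamma_1$ description of Proposition 1) and treat the measure-zero boundary cases $\overline{P}_c\in\{\gamma_1,P_c(\epsilon_m|g)\}$ separately, where the tangency at the peak or the unattained asymptote collapses the count to a single root or to none.
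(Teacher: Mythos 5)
Your argument is correct and follows exactly the route the paper intends: the paper offers no separate proof of Proposition~2, treating it as an immediate consequence of the two function shapes established in Proposition~1, and your case analysis (endpoint values $P_c(0|g)\approx 0$ and $\lim_{\epsilon\to\infty}P_c(\epsilon|g)=\gamma_1$, then counting crossings branch by branch via monotonicity and the intermediate value theorem) is the natural completion of that argument. Your explicit handling of the Type-1 comparison against the interior maximum $P_c(\epsilon_m|g)>\gamma_1$ and the boundary cases $\overline{P}_c\in\{\gamma_1,P_c(\epsilon_m|g)\}$ is in fact more careful than the paper, which leaves all of this implicit.
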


According to Proposition 2, we have the constraint range $[0,\epsilon_{c}]$ if $\gamma_{1}>\overline{P}_{c}$ and otherwise, the constraint range is $[0,\infty)$. Following the IEEE 802.22 WRAN, if $\overline{P}_{c}=0.1$, we consider $\gamma_{1}\leq\overline{P}_{c}$ as an extreme case corresponding to an extremely low energy harvesting rate, namely $\rho_{h}\ll (e_{s}+e_{t})$ in (\ref{gam}). Thus, we may choose the constraint range to be $[0,\epsilon_{c}]$ as the only possible option. 

To find $\epsilon_{c}$, we resort to the secant method as a root-finding algorithm where the constraint function can be approximated by a secant line through two points of the function. 
Starting with the two initial iterates $\epsilon_{0}$ and $\epsilon_{1}$, the next iterate $\epsilon_{2}$ is obtained by computing the value at which the secant line passing through the two points $(\epsilon_{0},\Phi(\epsilon_{0},\overline{P}_{c}))$ and $(\epsilon_{1},\Phi(\epsilon_{1},\overline{P}_{c}))$ as 
\begin{eqnarray}
\frac{\Phi(\epsilon _{1}, \overline{P}_{c})-\Phi(\epsilon _{0}, \overline{P}_{c})}{\epsilon _{2}-\epsilon _{1}}(\epsilon _{1}-\epsilon _{0})+\Phi(\epsilon _{1}, \overline{P}_{c})=0,
\end{eqnarray}
which yields the solution
\begin{eqnarray}
\epsilon _{2}=\epsilon _{1}-\Phi(\epsilon _{1}, \overline{P}_{c})\frac{\epsilon _{1}-\epsilon _{0}}{\Phi(\epsilon _{1}, \overline{P}_{c})-\Phi(\epsilon _{0}, \overline{P}_{c})}.
\end{eqnarray}
Hence, we can derive the recurrence relation as 
\begin{eqnarray}
\epsilon _{k}=\frac{\epsilon _{k-2}\Phi(\epsilon _{k-1}, \overline{P}_{c})-\epsilon _{k-1}\Phi(\epsilon _{k-2}, \overline{P}_{c})}{\Phi(\epsilon _{k-1}, \overline{P}_{c})-\Phi(\epsilon _{k-2}, \overline{P}_{c})}. 
\label{rec1}
\end{eqnarray}

Since the value of $\overline{P}_{c}$ is small, a solution can be obtained quickly by setting $\epsilon _{0}$ and $\epsilon _{1}$ close to zero. 
Then, we iterate until $\left|\epsilon_{k}-\epsilon_{k-1}\right|$ becomes very small, which is described in Algorithm 1. 
With the constraint rage $[0,\epsilon_{c}]$ fixed, we should be able to find an optimal energy detection threshold which maximizes the objective function $O(\epsilon)$. 

\begin{algorithm}
\caption{Optimization algorithm}\label{X}
\begin{algorithmic}[1]
\State Initialize $\epsilon _{0}$ and $\epsilon _{1}$.
\For{$k=1,2,3,...$}
\State Update $\epsilon _{k}$ using the recurrence relation (\ref{rec1}) 
\If{$\left |  \epsilon _{k}-\epsilon _{k-1}\right |$ is sufficiently small}
\State $\epsilon _{c}=\epsilon _{k}$
\Return $\epsilon _{c}$
\EndIf
\State \textbf{end if}
\EndFor
\State \textbf{end}
\If{$ \nabla O(\epsilon_{c})>0 $}
\State $\epsilon_{c}$ is the optimal solution.
\Else
\State Initialize $\epsilon _{0}=\epsilon_{c}$.
\For{$k=1,2,3,...$}
\State Update $\epsilon _{k}$ using the recurrence relation (\ref{rec2}) 
\If{$\left |  \epsilon _{k}-\epsilon _{k-1}\right |$ is sufficiently small}
\State $\epsilon^{*}=\epsilon _{k}$
\Return $\epsilon^{*}$
\EndIf
\State \textbf{end if}
\EndFor
\State \textbf{end} 
\EndIf
\State \textbf{end if}\end{algorithmic}
\end{algorithm}

We use the gradient descent method to find the maximum value in the constraint range with the recurrence relation as 
\begin{eqnarray}
\epsilon _{k}=\epsilon _{k-1}+\beta\!\:\nabla O(\epsilon_{k-1}) 
\label{rec2}
\end{eqnarray}
for the step size $\beta$. 
We use a fixed value for $\beta$ to avoid the complication of calculation and find an optimal one. 
In the first step of the recurrence, we assume $\epsilon _{0}=\epsilon _{c}$. 
From Proposition 1, if $\nabla O(\epsilon_{c})>0$, $\epsilon_{c}$ is an optimal point as the objective function is an increasing function in $[0,\epsilon_{c}]$. 
If $\nabla O(\epsilon_{c})<0$, we continue the recurrence process to find the maximum point given the objective function is a type-1 function with unique maximum point. 
Algorithm 1 for finding the optimal $\epsilon^{*}$ is stated above. 

Fig. \ref{fig:5} illustrates a whole process for subchannel clustering, spectral access and energy harvesting. The three traffic features are used to classify the PU traffic patterns through the BNP subchannel clustering. Based on the obtained idle and busy period statistics from the output of MCMC (Gibbs sampling), we find the corresponding subchannels $c_{t}$ and $c_{h}$ with maximum energy harvesting and transmission probabilities, respectively. If the residual energy in SU battery $B_{t}$ is less than the required energy for transmission, $c_{h}$ is selected to harvest energy and otherwise, $c_{t}$ to transmit data. Then, SU senses the selected subchannel using the optimal energy detection threshold obtained from the gradient descent method above. After sensing, if the sensing result is $o_{t}=0$, SU transmits data and otherwise, turns off the transmission. The residual energy in battery is then updated using (\ref{res}). To enable this, SU has to acquire information of $p ^ {c_ {h}} _ {i}$ and $p ^ {c_ {t}} _ {i}$ from the subchannel clustering, which in turn influences the energy harvesting and consuming rates $\rho_ {h}$ and $\rho_ {c}$, respectively. Therefore, the statistics information obtained through the accurate clustering process and resulting sensing parameter of the energy detection threshold play crucial role in obtaining the optimal sensing policy for SU.  

\begin{figure} [h]
\centering
\includegraphics[width=3.4in]{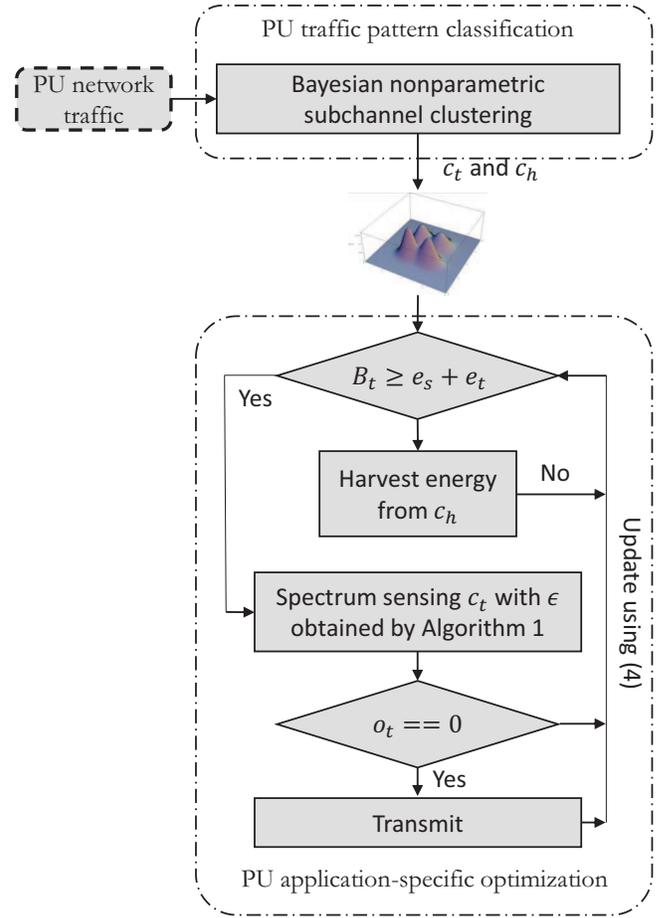}
\caption{A flow diagram for spectral access and energy harvesting.}
\label{fig:5}
\end{figure}

\section{Markov Decision Process Formulation}

Unlike the duty-cycling behavior between active mode and sleep mode, we introduce the MDP model to accurately predict the actions of SU which vary with the evolution of the residual energy in battery. Since idle/busy time distributions (to estimate harvesting and transmission opportunities) are obtained from clustering traffic patterns, we need to take decision in a real time. Even if we know the distribution parameters, we still need to automate the decision process because the SU harvesting and transmission is a real-time process. The parameters obtained would help us to achieve early convergence. For this purpose, the MDP which is Markov-chain based approach, is incorporated given idle/busy time distributions for traffic applications. Therefore, the MDP renders a good solution for online/real-time decision making.

\subsection{State Space of Battery}

SU decides whether to harvest energy or transmit based on the battery level. The event of $a_{t}=1$ means that the amount of the residual energy in battery is sufficient enough to transmit. To find the probability of having this event, we use the MDP formulation assuming the state is the discretization of the battery capacity, and evaluate the steady-state probabilities of the battery level with sufficient energy to transmit. We discretize the current residual energy in battery $b_{t}$ in $N_{b}$ levels where $N_{b}=\left\lfloor\frac{B_{max}}{e_{q}}\right\rfloor$ denotes the maximum amount of energy quanta that can be stored in battery. Here, one energy quantum corresponds to $e_{q}=\frac{(e_{s}+e_{t})}{n_{\tau}}$ where $n_{\tau}$ represents the number of states that enter harvesting mode. In general, if $N_{b}$ is sufficiently high, the discrete model can be considered as a good approximation of the continuous one.
Then, (4) can be rewritten in terms of energy quanta as 
\begin{eqnarray}
b_{t+1}=\min(b_{t}-e^{c}_{t}+e^{h}_{t},N_{b}) 
\label{qres}
\end{eqnarray}
where $e^{h}_{t}=\left\lfloor\frac{E^{h}_{t}}{e_{q}}\right\rfloor$ and $e^{c}_{t}=\left\lceil\frac{E^{c}_{t}}{e_{q}}\right\rceil$. Here, the floor is used to have a conservative harvesting performance, while the ceiling to assure a required energy consumption. Thus, the worst case of the battery level is assumed. 

\subsection{Transition Probability Matrix}

In the Markov chain model with $ N_ {b} $ states as shown in Fig. \ref{fig:6}, the harvesting state $i\in\left \{ 0, 1, ..., n_{\tau}-1 \right \}$ changes to state $j$ $(j\geq i)$ through energy harvesting as the current battery level is insufficient for transmission. The active state $i\in\left \{ n_{\tau}, n_{\tau}+1, ..., N_{b}-1 \right \}$ with sufficient energy to transmit will change to $ (i-n_{\tau})$-state or fail to transfer and return to $ i $-state again. 

\begin{figure} [h!]
\centering
\includegraphics[width=3.5in]{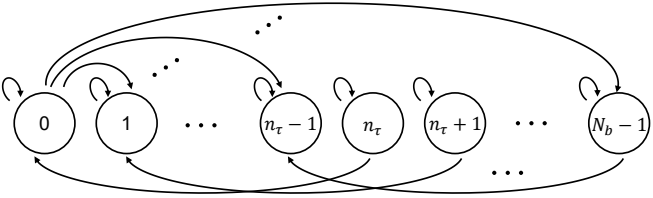}
\caption{The battery state transition with $n_{\tau}$ energy harvesting states for $N_{b}$-state Markov chain model.}
\label{fig:6}
\end{figure}

We define the transition probability matrix $\mathbf{U}$ for the $N_{b}$-state Markov chain model in Fig. \ref{fig:6} as
\begin{eqnarray}
\mathbf{U} (\epsilon|g)= \Big[\!\!
\begin{array}{ccc}
\mathbf{U}_{h} (\epsilon|g)  & \hspace{-0.1in}  | & \hspace{-0.1in} \mathbf{U}_{a} (\epsilon|g) \\
\end{array}  \!\!\Big]^{T} 
\end{eqnarray}
\begin{eqnarray}
\mathbf{U}_{h} (\epsilon|g) = \left(
\begin{array}{ccccccccc}
u_{0,0} & 0 & \dots & 0 \\
u_{1,0} & u_{1,1} & \dots & 0 \\
\vdots & \vdots & \ddots & 0 \\
u_{n_{\tau}-1,0} & \vdots & \ddots & u_{n_{\tau}-1,n_{\tau}-1} \\
u_{n_{\tau},0} & \vdots & \ddots & u_{n_{\tau},n_{\tau}-1} \\
u_{n_{\tau}+1,0} & \vdots & \ddots & u_{n_{\tau}+1,n_{\tau}-1} \\
\vdots & \vdots & \ddots & \vdots \\
u_{N_{b}-1,0} & \dots & \dots & u_{N_{b}-1,n_{\tau}-1} \\
\end{array} \right) 
\end{eqnarray}
\begin{eqnarray}
\mathbf{U}_{a} (\epsilon|g) = \hspace{2.3in}
\nonumber \\
\left(
\begin{array}{ccccccccc}
u_{0,n} & 0 & \dots & 0\\
0 & u_{1,n_{\tau}+1} & \dots & 0\\
0 & 0 & \ddots & 0\\
0 & 0 & 0 & u_{N_{b}-1-n_{\tau},N_{b}-1} \\
0 & 0 & 0 & 0\\
0 & 0 & 0 & 0 \\
u_{n_{\tau}-,n_{\tau}} & 0 & 0 & 0 \\
0 & u_{n_{\tau}+1,n_{\tau}+1} & 0 & 0 \\
0 & 0 & \ddots & 0 \\
0 & 0 & \dots & u_{N_{b}-1,N_{b}-1} \\
\end{array} \right) . 
\end{eqnarray}
Here, $\mathbf{U}_{h} (\epsilon|g)$ denotes the $N_{b}\times n_{\tau}$ matrix whose current battery level induces SU to enter harvesting mode, while $\mathbf{U}_{a} (\epsilon |g)$ the $N_{b}\times (N_{b}-n_{\tau})$ matrix whose current battery level induces SU to enter active mode. The components of these matrices are defined as 
\begin{align}
u_{i,i} &= 1_{\left[ 0 \leq E_{t}^{h} < e_{q} \right]}\, p^{c_{t}}_{o} +p^{c_{t}}_{i}, ~~(i<n_{\tau}) \\ 
u_{i,j} &= 1_{\left[ \left( i-j \right ) e_{q} \leq E_{t}^{h} < \left( i-j+1 \right)e_{q} \right]}\, p^{c_{t}}_{o}, \nonumber \\
&\mathrel{\phantom{=}} (0<j<n_{\tau}, ~j<i<N_{b}) \\ 
u_{i,i+n_{\tau}} &= \big[ 1-P_{f}\left( \epsilon \right) \big]\!\: p^{c_{t}}_{i} + \big[ 1-P_{d}\left( \epsilon |g \right) \big]\!\: p^{c_{t}}_{o}, ~~(i<n) \\ 
u_{i,i} &= P_{f} \left( \epsilon \right) p^{c_{t}}_{i} + P_{d} \left( \epsilon |g  \right) p^{c_{t}}_{o}, ~~(n\leq i\leq N_{b}-1). 
\end{align}
Here, $u_{i,i}$ $(i <n_{\tau})$ is the probability that energy harvesting is successful when $c_{h}$ is busy but there is insufficient energy to reach a higher battery level, or $c_{h}$ is idle. $u_{i,j}$ $(0<j<n_{\tau},~j<i<N_{b})$ is the one that energy harvesting is successful when $c_{h}$ is busy and the state changes from j to i. $u_{i,i+n}$ $(i<n_{\tau})$ is the one that $c_{t}$ is idle with no false alarm and successful transmission, or $c_{t}$ is busy with missed detection and collision. $u_{i,i}$ $(n \leq i \leq N_{b}-1)$ is the one that $c_{t}$ is idle with false alarm, or $c_{t}$ is busy with no missed detection.

\subsection{Steady-State Probability and Optimal Threshold Algorithm}

We define the steady-state probability vector of the $N_{b}$-state Markov chain as $\Pi=\left[ \pi_{0},\pi_{1},..., \pi_{N_{b}-1} \right]$, where $\Pi$ is the left eigenvector of $\mathbf{U} (\epsilon )$ corresponding to the unit eigenvalue as 
\begin{eqnarray}
\Pi\!\: \mathbf{U} (\epsilon|g)=\Pi .
\label{SSP}
\end{eqnarray}
To derive the steady-state probability vector $\Pi$, we need to make the necessary assumption below. 
\begin{assumption}
The maximum energy quanta $N_{b}$ should be sufficient enough to satisfy the following conditions:
\begin{subequations}
\begin{align}
e_{t}^{h} + n_{\tau}-1 &\leq N_{b}-1 \\
\left\lfloor\frac{E^{h}_{t}}{e_{q}}\right\rfloor &\leq N_{b}-n_{\tau} \\
\frac{E_{t}^{h}}{e_{q}} &<  N_{b}-n_{\beta}+1 \\
N_{b} &> \left( \frac{g \varphi P_{p}\!\: T_{t}}{e_{s}+e_{t}}+1 \right) n_{\tau}-1.
\end{align}
\end{subequations}
\end{assumption}
\noindent
Assumption 1 implies that the maximum state $N_{b}-1$ must be greater than the sum of the harvesting energy quanta $e_{t}^{h}$ and the maximum number of harvesting state $n_{\tau}-1$. This means that the maximum state number should always be greater than the maximum allowable state due to harvesting. Thus, (\ref{qres}) can be rewritten as 
\begin{eqnarray}
b_{t+1}=b_{t}-e^{c}_{t}+e^{h}_{t}.
\end{eqnarray}

With Assumption 1, we define the number of energy quanta charged through energy harvesting as 
\begin{align}
n_{\kappa}& = \left \{ n_{\kappa}\in \mathbb{N}\cap \left \{0\right \}\!\: \big|\, n_{\kappa}e_{q}\leq  E^{h}_{t}<(n_{\kappa}+1)e_{q} \right \}\nonumber \\
& = \left \lfloor  \frac{g n_{\tau} \varphi P_{p}T_{t}}{e_{s}+e_{t}} \right \rfloor. 
\end{align}
\begin{proposition}
Then, the steady-state probability vector $\Pi$ can be evaluated as 
\begin{equation}
\label{eqn:Ei}
\pi_{i}= 
\begin{cases}
\frac{\tau}{n_{\alpha}\alpha+n_{\tau}\tau}, & ~~(0\leq i<n_{\tau}) \\
\frac{\kappa}{n_{\kappa}\kappa+n_{\tau}\tau}, & ~~(n_{\tau}\leq i<n_{\tau}+n_{\kappa}) \\  
0, & ~~(n_{\tau}+n_{\kappa}\leq i<N_{b})
\end{cases}
\end{equation}
where
\begin{align}
\kappa &= p^{c_{h}}_{o} \\
\tau &= \big[ 1-P_{f}\left ( \epsilon \right ) \big]\!\: p^{c_{t}}_{i} + \big[ ( 1-P_{d}\left ( \epsilon |g \right ) \big]\!\: p^{c_{t}}_{o}.
\end{align}
\end{proposition}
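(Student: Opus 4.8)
The plan is to treat the claim as a guess-and-verify problem for the stationary equation $\Pi\,\mathbf{U}(\epsilon|g)=\Pi$ together with normalization, after first distilling the transition matrix down to its essential one-step dynamics. Invoking Assumption~1 removes the $\min(\cdot,N_b)$ truncation in (\ref{qres}), so that from the block entries the chain reduces to a clean two-regime walk on the battery level: from any harvesting state $i\in\{0,\dots,n_\tau-1\}$ the level jumps up to $i+n_\kappa$ with probability $\kappa=p_o^{c_h}$ (a successful harvest raises the battery by exactly $e_t^h=\lfloor E_t^h/e_q\rfloor=n_\kappa$ quanta) and stays at $i$ with probability $1-\kappa$; from any active state $j\in\{n_\tau,\dots,N_b-1\}$ the level falls to $j-n_\tau$ with probability $\tau$ and stays at $j$ with probability $1-\tau$. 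First I would record that each row of this reduced matrix sums to one, so it is genuinely stochastic, and that the only predecessors of a state $c$ are $c-n_\kappa$ (a harvesting state, contributing flux $\pi_{c-n_\kappa}\kappa$) and $c+n_\tau$ (an active state, contributing $\pi_{c+n_\tau}\tau$), besides the self-loop.

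Next I would substitute the ansatz $\pi_i=\tau/D$ on the harvesting block, $\pi_i=\kappa/D$ on $\{n_\tau,\dots,n_\tau+n_\kappa-1\}$, and $\pi_i=0$ above, with $D=n_\tau\tau+n_\kappa\kappa$, into the per-state balance equation $\pi_c=\pi_c\,u_{c,c}+\pi_{c-n_\kappa}\kappa+\pi_{c+n_\tau}\tau$. After cancelling the self-loop term this reduces to the single flux-balance identity $\pi_c\,(\text{departure rate})=\pi_{c-n_\kappa}\kappa+\pi_{c+n_\tau}\tau$, and the whole verification collapses to the observation that an up-jump out of a harvesting state carries flux $\kappa\cdot(\tau/D)$ while a down-jump out of a populated active state carries flux $\tau\cdot(\kappa/D)$, and these are equal. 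Normalization is then immediate since $\sum_i\pi_i=(n_\tau\tau+n_\kappa\kappa)/D=1$.

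The step I expect to require the most care is the boundary bookkeeping together with the justification that the guessed vector is the \emph{unique} steady state. For a given $c$ exactly one of its two potential predecessors $c-n_\kappa$ and $c+n_\tau$ actually lies in the populated support --- which one depends on whether $c<n_\kappa$ or $c\ge n_\kappa$ --- so the balance equation must be checked in both sub-cases to confirm that the single surviving inflow term always equals $\tau\kappa/D$; running this across the ranges $0\le c<n_\tau$, $n_\tau\le c<n_\tau+n_\kappa$, and $c\ge n_\tau+n_\kappa$ each reduces to the same identity. For the zero block I would argue transience directly: from any state $\ge n_\tau+n_\kappa$ the only moves are downward by $n_\tau$ or self-loops, while the largest level reachable by harvesting is $i+n_\kappa<n_\tau+n_\kappa$, so once the chain drops below $n_\tau+n_\kappa$ it never returns and those states carry no stationary mass. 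Finally, since the self-loops render the chain aperiodic and the reduced dynamics form a single recurrent class on $\{0,\dots,n_\tau+n_\kappa-1\}$, the balance-satisfying normalized vector is necessarily the unique stationary distribution $\Pi$, which is exactly the claimed $\pi_i$.
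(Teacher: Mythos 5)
Your proof is correct and follows essentially the same route as the paper: both are guess-and-verify arguments that substitute the claimed vector into the stationary equation $\Pi\,\mathbf{U}(\epsilon|g)=\Pi$ under Assumption 1, except that the paper merely exhibits the block structure of $(\mathbf{I}-\mathbf{U}(\epsilon|g))^{T}$ and asserts the claimed vector is ``a kind of the null vector,'' whereas you actually carry out the per-state flux-balance check (reducing everything to $\kappa\cdot\tau/D=\tau\cdot\kappa/D$), the normalization, and the transience of the states at or above $n_\tau+n_\kappa$. The one caveat concerns your closing uniqueness claim: the reduced walk (up by $n_\kappa$ from harvesting states, down by $n_\tau$ from active states) forms a single recurrent class only when $\gcd(n_\tau,n_\kappa)=1$; otherwise the support $\{0,\dots,n_\tau+n_\kappa-1\}$ splits into several closed classes and the displayed $\Pi$ is just one admissible stationary vector --- a gap the paper shares, since it too only establishes membership in the null space.
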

\begin{proof}
See Appendix B.
\end{proof}

Using the steady-state probability vector $\Pi$ obtained by Proposition 3, the probability of SU entering active mode $P^{M}_{a}(\epsilon | g)$ can be derived as 
\begin{align}
P^{M}_{a}(\epsilon | g)=\sum_{i=n_{\tau}}^{N_{b}-1}\pi_{i} = \frac{n_{\kappa}\kappa}{n_{\kappa}\kappa+n_{\tau}\tau} = \frac{\big(\frac{n_{\kappa}}{n_{\tau}}\big)\kappa}{\big(\frac{n_{\kappa}}{n_{\tau}}\big)\kappa+\tau}. 
\end{align}
If $n_{\tau}$ is sufficiently large, we can approximate $\frac{n_{\kappa}}{n_{\tau}}\cong\frac{\varphi P_{p}T_{t}}{e_{s}+e_{t}}$ in (\ref{prob}), which yields 
\begin{equation}
P^{M}_{a}(\epsilon | g )=\frac{g \varphi P_{p}T_{t}\kappa}{g \varphi P_{p}T_{t}\kappa+(e_{s}+e_{t})\!\:\tau}. 
\end{equation}

To find the optimal energy detection threshold, we define the MDP objective function using $P^{M}_{a}(\epsilon | g)$ obtained by the MDP as $O^{M}(\epsilon)=(1-P_{f}\left ( \epsilon \right )) P^{M}_{a}(\epsilon | g)$, and then express the optimization problem again as 
\begin{eqnarray}
\max_{\epsilon }~ O^{M}\left ( \epsilon  | g\right )  ~~\textrm{s.t.}~~ P_{c}(\epsilon | g)\leq \overline{P}_{c}.
\end{eqnarray}
\begin{proposition}
The MDP objective function $O^{M}(\epsilon | g)$ can be classified into two types of function $f^{M}(\epsilon |g)$ as follows:
\begin{enumerate}
\item $f^{M}(\epsilon_{1}|g)<f^{M}(\epsilon_{2}|g)$ for $\epsilon_{1}<\epsilon_{2} \leq \epsilon_{m}$, \\
$f^{M}(\epsilon_{3}|g)>f^{M}(\epsilon_{4}|g)$ for $\epsilon_{m}\leq\epsilon_{3} < \epsilon_{4}$,\\
$\lim_{\epsilon\rightarrow \infty}f(\epsilon|g) =\gamma_{2}$ 
where
$\epsilon_{m}=\arg\max_{\epsilon}f^{M}(\epsilon|g)$.
\item $f^{M}(\epsilon_{1}|g)<f^{M}(\epsilon_{2}|g)$ for $\epsilon_{1}<\epsilon_{2}$,\\
$\lim_{\epsilon\rightarrow \infty}f^{M}(\epsilon|g) =\gamma_{2}$
\end{enumerate}
where
\begin{eqnarray}
\gamma_{2}=\frac{g \varphi P_{p}T_{t}\kappa}{g \varphi P_{p}T_{t}\kappa+e_{s}+e_{t}}.
\end{eqnarray}
\end{proposition}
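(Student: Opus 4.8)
The plan is to run the argument of Appendix~A essentially verbatim, since $O^{M}(\epsilon|g)$ shares the functional skeleton of $O(\epsilon|g)$. Writing $\rho_{h}=g\varphi P_{p}T_{t}\kappa$ and keeping $\tau=\big[1-P_{f}(\epsilon)\big]p_{i}^{c_{t}}+\big[1-P_{d}(\epsilon|g)\big]p_{o}^{c_{t}}$ as before, we have
\begin{equation}
O^{M}(\epsilon|g)=\frac{\rho_{h}\big[1-P_{f}(\epsilon)\big]}{\rho_{h}+(e_{s}+e_{t})\,\tau},
\end{equation}
which differs from $O(\epsilon|g)=\rho_{h}[1-P_{f}]/(\rho_{h}+e_{s}+\tau e_{t})$ only in that $e_{s}$ is now carried inside the factor $\tau$. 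The first step is the asymptote: because $P_{f}(\epsilon)\to 0$ and $P_{d}(\epsilon|g)\to 0$ as $\epsilon\to\infty$ by the tail of the $Q$-function, we get $\tau\to p_{i}^{c_{t}}+p_{o}^{c_{t}}=1$ and $1-P_{f}\to 1$, hence $O^{M}\to\rho_{h}/(\rho_{h}+e_{s}+e_{t})=\gamma_{2}$.

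Next I would differentiate. Setting $x=1-P_{f}$ and $y=1-P_{d}$ (so that $x'=-P_{f}'>0$ and $y'=-P_{d}'>0$, both being scaled Gaussian densities), the quotient rule shows that the sign of $dO^{M}/d\epsilon$ equals the sign of
\begin{equation}
\Psi(\epsilon)=\rho_{h}\,x'+(e_{s}+e_{t})\,p_{o}^{c_{t}}\,(x'y-xy'),
\end{equation}
where I have used $\tau=xp_{i}^{c_{t}}+yp_{o}^{c_{t}}$ together with the identity $x'\tau-x\tau'=p_{o}^{c_{t}}(x'y-xy')$. This is structurally identical to the sign function $(\rho_{h}+e_{s})x'+e_{t}p_{o}^{c_{t}}(x'y-xy')$ obtained in Appendix~A for $O(\epsilon|g)$: both have the form $c_{1}x'+c_{2}p_{o}^{c_{t}}(x'y-xy')$ with strictly positive constants. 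Thus the entire analysis transfers once $(c_{1},c_{2})=(\rho_{h}+e_{s},e_{t})$ is relabeled as $(\rho_{h},e_{s}+e_{t})$.

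The heart of the matter is to show that $\Psi$ flips sign at most once, from positive to negative. I would first analyze $x'y-xy'=xy\,\tfrac{d}{d\epsilon}\ln\big[(1-P_{f})/(1-P_{d})\big]$. At $\epsilon=0$ both arguments of the $Q$-function collapse to $-\sqrt{N_{s}}$, so $x=y=Q(\sqrt{N_{s}})$ and the log-ratio vanishes; as $\epsilon\to\infty$ the log-ratio again vanishes since $x,y\to 1$; and in between it is strictly positive because $\sigma_{w}^{2}<g\sigma_{p}^{2}+\sigma_{w}^{2}$ forces $P_{d}>P_{f}$. Hence $\ln[(1-P_{f})/(1-P_{d})]$ is a nonnegative bump, its derivative is positive then negative, and $x'y-xy'$ changes sign exactly once, say at $\hat{\epsilon}$. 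On $(0,\hat{\epsilon})$ both terms of $\Psi$ are positive, so $\Psi>0$. On $(\hat{\epsilon},\infty)$ the condition $\Psi>0$ reduces, after dividing by $x'>0$, to
\begin{equation}
x\,\frac{y'}{x'}-y<\frac{\rho_{h}}{(e_{s}+e_{t})\,p_{o}^{c_{t}}},
\end{equation}
whose left side tends to $+\infty$ because the Gaussian ratio $y'/x'$ grows without bound in the tail (the $Q$-density with the larger variance decays slower); provided this left side is increasing past $\hat{\epsilon}$, a single upward crossing of the constant right side yields the unique interior maximizer $\epsilon_{m}$, i.e.\ the type-1 shape rising to a peak and falling back to $\gamma_{2}$.

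Finally, the type-2 shape is the degenerate regime in which the second term of $\Psi$ is identically zero, so that $\Psi\equiv\rho_{h}x'>0$ and $O^{M}$ increases monotonically to $\gamma_{2}$; this happens precisely when $p_{o}^{c_{t}}=0$ (an ideally always-idle transmit subchannel) or when $g\sigma_{p}^{2}=0$ (zero SNR, giving $P_{d}\equiv P_{f}$ and $x'y-xy'\equiv 0$). I expect the single-crossing step in the tail---establishing that $x(y'/x')-y$ is eventually increasing, which rests on the log-concavity of $Q$ and the monotone-likelihood-ratio comparison of the two Gaussian tails---to be the main obstacle; everything else is a relabeled re-run of Appendix~A.
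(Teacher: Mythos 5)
Your computation of the asymptote $\gamma_{2}$ and your reduction of the sign of $dO^{M}/d\epsilon$ to $\Psi(\epsilon)=\rho_{h}x'+(e_{s}+e_{t})p_{o}^{c_{t}}(x'y-xy')$ are correct, and this is a genuinely different route from the paper's: Appendix C does not expand the quotient at all, but instead writes $O^{M}=A(\epsilon)B(\epsilon|g)$ with $A=1-P_{f}$ increasing and $B=P_{a}^{M}$ decreasing, asserts the four structural properties (\ref{pro1})--(\ref{pro4}) (a unique crossing point $\epsilon_{e}$ of $A$ and $B$, and $A'>-B'$), and then reuses Propositions 5 and 6 verbatim to conclude that $df^{M}/d\epsilon$ is positive on $(0,\epsilon_{e})$ and, once negative, stays negative. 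Your decomposition has the advantage of making explicit that the duty-cycle and MDP cases differ only by the relabeling $(\rho_{h}+e_{s},e_{t})\mapsto(\rho_{h},e_{s}+e_{t})$, which is cleaner than the paper's restatement.

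However, there is a genuine gap at exactly the point you flag. The dichotomy in the Proposition is equivalent to $\Psi$ changing sign at most once, and your argument for this rests on two unproven single-crossing claims. First, you infer that $\ln\left[(1-P_{f})/(1-P_{d})\right]$ has a derivative that is ``positive then negative'' merely from the fact that it is a nonnegative function vanishing at $\epsilon=0$ and $\epsilon=\infty$; a nonnegative bump with these boundary values need not be unimodal, so the conclusion that $x'y-xy'$ changes sign \emph{exactly once} does not follow and requires a separate argument about the two $Q$-function arguments. Second, on $(\hat{\epsilon},\infty)$ you reduce $\Psi>0$ to $x(y'/x')-y$ lying below a constant and then say ``provided this left side is increasing past $\hat{\epsilon}$'' --- but that proviso \emph{is} the theorem; without it you cannot exclude $\Psi$ oscillating in sign and $O^{M}$ having several local maxima. (Your limit $x(y'/x')-y\to\infty$ also shows, incidentally, that in the non-degenerate case $\Psi$ must eventually go negative, which would make the type-2 alternative vacuous except when $p_{o}^{c_{t}}=0$ or $g\sigma_{p}^{2}=0$ --- a stronger statement than the Proposition claims and one you should either prove or retract.) To close the proof along your lines you would need to establish the tail monotonicity of $\phi(u_{d})/\phi(u_{f})$-type ratios explicitly; alternatively, you could fall back on the paper's route and verify properties (\ref{pro1})--(\ref{pro4}) for your $A$ and $B$, which is where the paper itself buries the same difficulty.
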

\begin{proof}
See Appendix C.
\end{proof}
By Proposition 4 we can optimize in the same way as Algorithm 1 in Section IV. 

\section{Results}

The simulation results for the proposed duty cycle and MDP based stochastic models are also presented. To show the effectiveness of the proposed scheme by comparing how close to actual capacity, we define 
\begin{equation}
\beta = \frac{1}{N_{t}}\sum_{t=1}^{N_{t}}\frac{T_{t}}{T_{slot}}\,\boldsymbol{C}\!\: a_{t}(1-o_{t})(1-S_{c_{t}}) 
\label{acap}
\end{equation}
for the actual capacity obtained from simulation based on the Monte-Carlo method where $N_{t}$ is the number of simulation iterations using the exhaustive search for energy detection threshold. 
We use real wireless traces available online \cite{O1} \cite{O2} for 3G network. 
We utilize three sources (UDP, VoIP, Game) in our data set.
Unless otherwise stated, the values of the parameters used here are listed in Table \ref{table2}, which are mainly drawn from \cite {Y11}. 

\begin{table}[h!]
\caption{Simulation parameters}
\centering
\begin{tabular}{ |p{0.8cm}||p{5cm}|p{1.2cm}| }
\hline
Symbol & Definition & Value \\
\hline \hline
$W$ &Bandwith & 1MHz \\ \hline
$T_{slot}$&Slot duration & 10ms \\ \hline
$T_{s}$&Sensing duration & 2ms \\ \hline
$T_{t}$&Transmission duration & 98ms \\ \hline
$B_{0}$&Initial energy & 0 J \\ \hline
$P_{s}$&Sensing power & 110 mW \\ \hline 
$P_{t}$&Transmit power & 50 mW\\ \hline 
$P_{nc}$&Non-ideal circuit power & 115.9 mW\\ \hline 
$\eta$&Efficiency of power amplifier & -5.65 dB\\ \hline 
$\varphi$&Energy harvesting efficiency & 0.2\\ \hline 
$\sigma_{p}^{2}/\sigma_{w}^{2}$&SNR of PU signal at SU transmitter & -10 dB\\ \hline 
$\overline{P}_{c}$&Target probability of collision & 0.1\\ \hline 
$B_{max}$&Maximum capacity of battery & 1 mJ\\ \hline 
\end{tabular}\label{table2}
\end{table}

\subsection{Traffic-Awareness via Clustering}

Before evaluating the performance of the proposed scheme, we confirm the performance of traffic patterns clustering that results from the proposed MCMC based sampling and the variational inference. In Fig. \ref{fig:7}, we compare the accuracy of the MCMC based sampling, variational inference, and $K$-means (as baseline) clustering algorithms when data points (observations) are generated from 3 different (UDP, VoIP, Game) traffic patterns being mixed. The $K$-means algorithm is one for grouping a given data into $K$ clusters by minimizing the dispersion of the distance between each clusters. Unlike the other approaches, the $K$-means algorithm cannot estimate the number of clusters, and it should be performed only by assuming a fixed number of traffic sources. We see that the Bayesian approaches offer higher accuracy than the $K$-means algorithm. This is because the Bayesian approaches are to approximate a prior probability and a likelihood function derived from a statistical model for the observed data, whereas the $K$-means considers only the differences in observed traffic values. In the variational inference, we observe some errors compared to the MCMC method because we approximate the latent variables assumed by the mean-field theory. In Fig. \ref{fig:8}, we compare the two Bayesian approaches in terms of their elapsed times. We notice that the elapsed times increase as the number of data points increases, and it is confirmed that the variational inference shows less elapsed time than the MCMC method. Hence, if we can derive a set of equations used to iteratively update the parameters well, the former converges faster than the latter requiring a large amount of sampling work.

\begin{figure}
\centering
\includegraphics[width=3.5in]{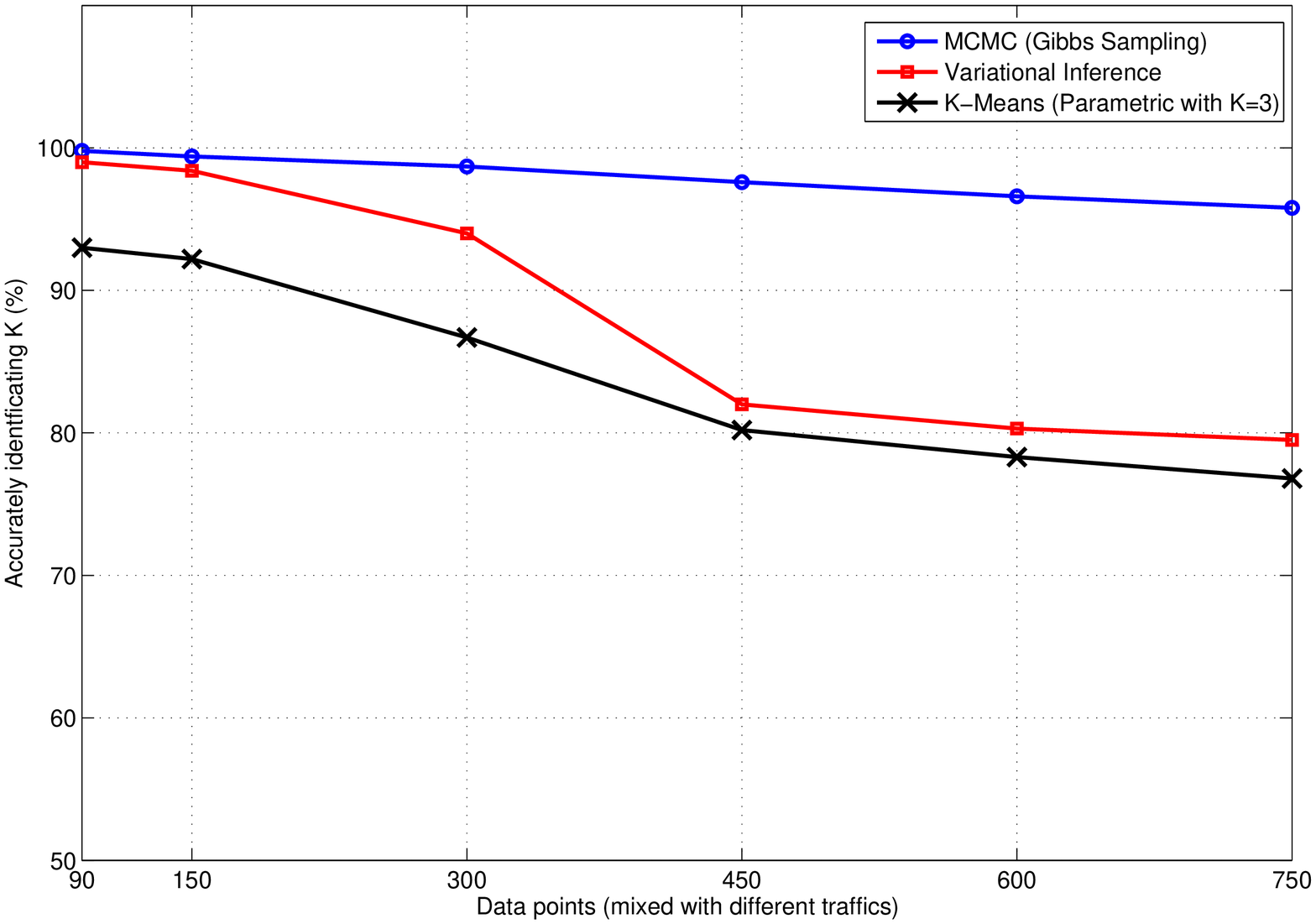}
\caption{Clustering accuracy of the proposed MCMC and variational inference, and $K$-means algorithms.}
\label{fig:7}
\end{figure}

\begin{figure}
\centering
\includegraphics[width=3.5in]{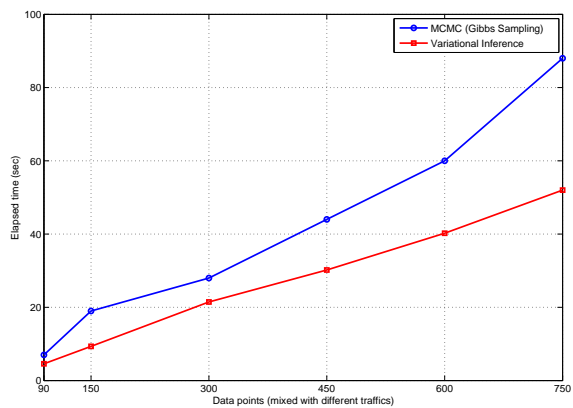}
\caption{Elapsed time of the MCMC and variational inference algorithms.}
\label{fig:8}
\end{figure}

\begin{figure}
\centering
\includegraphics[width=4.0in]{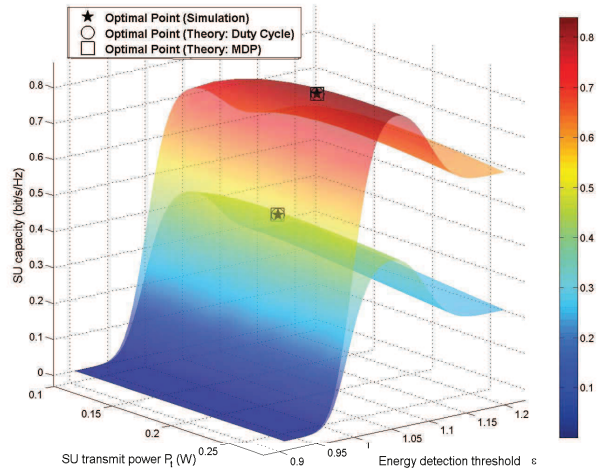}
\caption{SU capacity versus SU transmit power $P_{t}$ and energy detection threshold $\epsilon$. (top: VoIP with $p_{i}^{c_{h}}=0.2$ bottom: Game with $p_{i}^{c_{h}}=0.5$)}
\label{fig:9}
\end{figure} 

\begin{figure}
\centering
\includegraphics[width=3.6in]{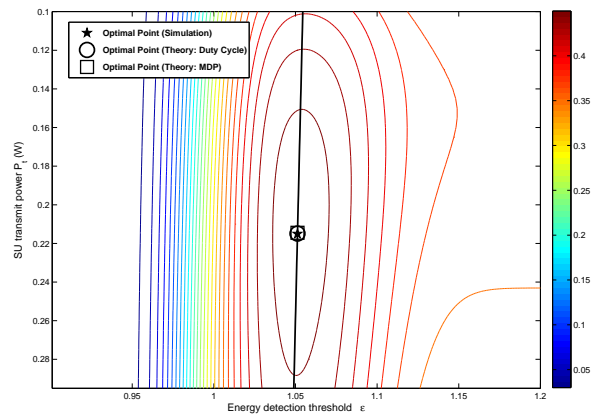}
\centerline{(a)}
\includegraphics[width=3.6in]{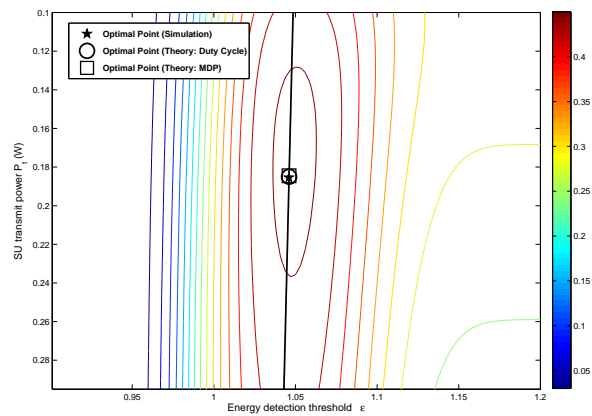}
\centerline{(b)}
\caption{An optimal point of SU transmit power $P_{t}$ and energy detection threshold $\epsilon$ for (VoIP, Game) traffic with ((a) VoIP $p_{i}^{c_{h}}=0.2$ (b) Game $p_{i}^{c_{h}}=0.5$)}
\label{fig:10}
\end{figure} 

\subsection{SU Optimal Sensing Threshold}

Fig. \ref{fig:9} shows the SU achievable rate capacity based on the duty cycle model with varying energy detection threshold $\epsilon$ and SU transmit power $P_{t}$. Note that the probability of the {\it harvest} subchannel $c_{h}$ being idle was measured as $p_{i}^{c_{h}}=(0.2, 0.5)$ for the (VoIP, Game) traffic applications, respectively. We observe an interesting trade-off in choice of the variables $\epsilon$ and $P_{t}$. For small $\epsilon$, the false alarm probability increases, resulting in low transmission probability. To the contrary, for large $\epsilon$, it decreases and SU more likely transmits data if its residual energy is enough for transmission. We notice that the capacity converges to a specific value proportional to $\gamma_{1}$ in (\ref{gam}), which is the ratio of the energy harvesting rate to the sum of the energy harvesting and consuming rates. Hence, an optimal $\epsilon$ balances the sensing accuracy trade-off. Likewise, there is the energy causality trade-off in the SU transmit power. For large $P_{t}$, the probability of SU being active decreases, while for small $P_{t}$, it increases but the SNR decreases. 
We see the VoIP traffic offering better performance than the Game traffic, as the former yields higher harvesting rate with continuous and short intervals between voice packets. However, the latter with $p_{i}^{c_{h}}=0.5$ shows the packet intervals changing dynamically, resulting in low harvesting rate. 

Fig. \ref{fig:10} shows an optimal point of the actual capacity in (\ref{acap}), the duty cycle and MDP models for (VoIP, Game) traffic. The black line shows an optimal $\epsilon$ obtained from (\ref{obj}) according to $P_{t}$, where the optimal $\epsilon$ decreases slightly as $P_{t}$ increases. This is because a tight energy causality due to the increased $P_{t}$ requires less transmission opportunity. We can see that the optimal point of VoIP traffic has a larger value of $P_{t}$ than that of Game traffic. In VoIP traffic, the increased $P_{t}$ results in low transmission opportunity but increases the SNR of SU, and a small value of $p_{i}^{c_{h}}$ guarantees the energy causality. It means that VoIP traffic subchannel idle/busy statistics, which show higher harvesting opportunity than Game traffic, offset the tight energy causality due to the increased $P_{t}$. The optimal point of the duty cycle and MDP models offer almost the same performance as the actual capacity from simulation. In Figs. \ref{fig:9} and \ref{fig:10}, it is evident that the subchannels carrying distinct traffic patterns exhibit different harvesting rates, and hence the appropriate values of $\epsilon$ should be determined considering both the energy causality and PU collision constraints. 

\subsection{SU Performance by Clustering Algorithms}

We evaluate the SU achievable rate capacity by the proposed clustering algorithms with respective threshold optimization. We assume 3 different (UDP, VoIP, Game) traffic sources with 10 subchannels, respectively. In this setting, SU selects the subchannels $c_{t}$ and $c_{h}$ from $N_{c}=30$ subchannels with 450 data points generated by using MCMC and variational inference, respectively. Then, the optimal energy detection threshold for the rate capacity is determined based on the duty cycle and MDP models. Fig. \ref{fig:11} shows the SU achievable rate capacity for varying energy detection threshold $\epsilon$. The black line represents actual capacity using $\epsilon$ obtained from (\ref{obj}) with accurate clustering information. We see that the MCMC is closer to the optimal line than variational inference, while the duty cycle and MDP models offer almost the same performance in optimizing $\epsilon$. This clearly shows the higher sensing accuracy in selecting $c_{t}$ and $c_{h}$ subchannels of the MCMC than variational inference. 

Fig. \ref{fig:12} shows the SU achievable rate capacity obtained by using the optimal value $\epsilon^{*}$ for varying SU transmit power $P_{t}$. We see that the variational inference reaches maximum when $P_{t}=0.24W$, but the optimal actual capacity does slightly later when $P_{t}=0.26W$, like the MCMC. It means that accurate clustering information leads to higher energy harvesting rate, which allows SU to increase the residual energy in battery. Hence, SU can increase the maximum achievable rate capacity with higher transmit power $P_{t}$.

\begin{figure}
\centering
\includegraphics[width=3.5in]{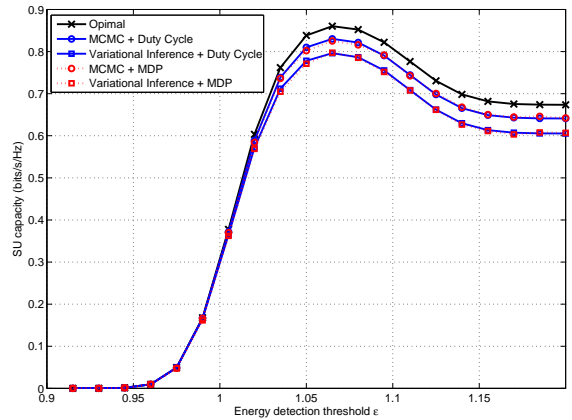}
\caption{SU rate capacity versus energy detection threshold $\epsilon$ when $P_{t}$=0.2W.}
\label{fig:11}
\end{figure} 

\begin{figure}
\centering
\includegraphics[width=3.5in]{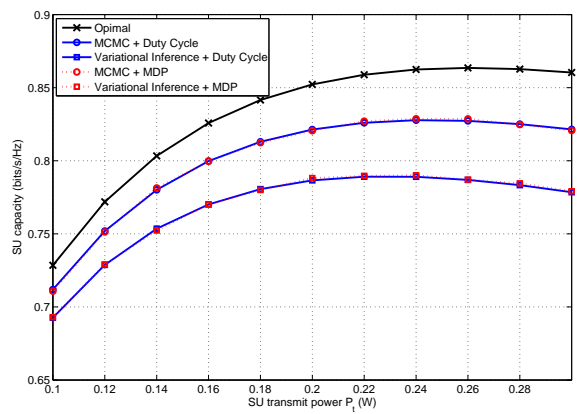}
\caption{SU rate capacity versus SU transmit power $P_{t}$ with optimal $\epsilon^{*}$.}
\label{fig:12}
\end{figure} 

\section{Conclusion}

We have proposed an optimal spectrum sensing policy for maximizing the SU capacity in OFDMA based CRNs which is powered by energy harvesting.
SU collected traffic pattern information through observation of PU subchannels and classified the idle/busy period statistics for each PU subchannel using the MCMC and variational inference algorithms. Based on these statistics, we developed the stochastic SU capacity models which are the duty cycle based one defined by the times spent in active and sleep mode, and the MDP model based on the evolution of the residual energy in battery. The energy detection threshold was optimized to maximize the SU capacity while satisfying the energy causality and PU collision constraints according to traffic patterns. We have shown the performance trade-off of the BNP subchannel clustering algorithms by comparing the accuracy and elapsed time of algorithms. It was shown that SU can optimize the stochastic model by selecting the threshold referring to the idle/busy period statistics of PU subchannels. It was also shown that the proposed duty cycle and MDP model achieve similar capacity to that of the actual capacity from simulation based on the Monte-Carlo method.

\section*{Acknowledgment}

This work was supported by the National Research Foundation of Korea (NRF) Grant funded by the Korean Government under Grant 2014R1A5A1011478.

\appendices
\section{Proof of proposition 1}
We define $A(\epsilon|g)=[1-P_{d}\left ( \epsilon|g \right )]$ and $B(\epsilon|g)=P_{a}\left ( \epsilon |g \right )$ for the proof of the collision probability $P_{c}(\epsilon |g)$ in (\ref{con}).
\begin{eqnarray}
\frac{\mathrm{d}f(\epsilon|g) }{\mathrm{d} \epsilon}=\frac{\mathrm{d} A(\epsilon|g) }{\mathrm{d} \epsilon }B(\epsilon|g)+A(\epsilon|g)\frac{\mathrm{d} B(\epsilon|g) }{\mathrm{d} \epsilon}.
\end{eqnarray}
The increasing function $A(\epsilon|g)$ and the decreasing function $B(\epsilon|g )$ satisfy the following properties: 
\begin{align}
\lim_{\epsilon\rightarrow 0}A(\epsilon|g) &< \lim_{\epsilon\rightarrow 0}B(\epsilon|g)
\label{pro1}
\\
\lim_{\epsilon\rightarrow \infty }A(\epsilon|g) &> \lim_{\epsilon\rightarrow \infty}B(\epsilon|g)
\label{pro2}
\\
A(\epsilon_{e}|g)&=B(\epsilon_{e}|g)
\label{pro3}
\\
\frac{\mathrm{d} A(\epsilon|g) }{\mathrm{d} \epsilon }&>-\frac{\mathrm{d} B(\epsilon|g) }{\mathrm{d} \epsilon}.
\label{pro4}
\end{align}
The same results can be obtained even if $A(\epsilon)=[1-P_{f}\left ( \epsilon \right )]$ for the proof of the objective function $O(\epsilon |g)$ in (\ref{obj}).
\begin{proposition}
If $0<\epsilon<\epsilon_{e}$, $\frac{\mathrm{d}f(\epsilon|g) }{\mathrm{d} \epsilon}>0$.
\end{proposition}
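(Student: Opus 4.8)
The plan is to treat $f(\epsilon|g)=A(\epsilon|g)\,B(\epsilon|g)$ as a product and examine its derivative $\frac{\mathrm{d}f}{\mathrm{d}\epsilon}=\frac{\mathrm{d}A}{\mathrm{d}\epsilon}B+A\frac{\mathrm{d}B}{\mathrm{d}\epsilon}$ directly. The difficulty is that the two terms carry opposite signs: $A$ increasing gives $\frac{\mathrm{d}A}{\mathrm{d}\epsilon}>0$, while $B$ decreasing gives $\frac{\mathrm{d}B}{\mathrm{d}\epsilon}<0$, so monotonicity alone does not settle the sign of the sum. The idea is to combine the \emph{ordering} of $A$ and $B$ on the subinterval $0<\epsilon<\epsilon_{e}$ with the derivative inequality \eqref{pro4}.

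First I would establish that $A(\epsilon|g)<B(\epsilon|g)$ throughout $0<\epsilon<\epsilon_{e}$. This follows from the boundary behavior \eqref{pro1}, namely $A<B$ as $\epsilon\rightarrow 0$, the crossing condition \eqref{pro3} stating $A(\epsilon_{e}|g)=B(\epsilon_{e}|g)$, and the fact that $A$ is increasing while $B$ is decreasing. Since $A$ starts strictly below $B$ and the increasing curve $A$ can meet the decreasing curve $B$ at $\epsilon_{e}$ only, monotonicity forbids any earlier crossing, so $A<B$ holds on the open interval.

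With this ordering secured, I would bound the derivative from below. Because $B>A$ and $\frac{\mathrm{d}A}{\mathrm{d}\epsilon}>0$, replacing $B$ by $A$ in the first term only decreases it, giving
\begin{equation}
\frac{\mathrm{d}f}{\mathrm{d}\epsilon}=\frac{\mathrm{d}A}{\mathrm{d}\epsilon}B+A\frac{\mathrm{d}B}{\mathrm{d}\epsilon}>\frac{\mathrm{d}A}{\mathrm{d}\epsilon}A+A\frac{\mathrm{d}B}{\mathrm{d}\epsilon}=A\left(\frac{\mathrm{d}A}{\mathrm{d}\epsilon}+\frac{\mathrm{d}B}{\mathrm{d}\epsilon}\right).
\end{equation}
Property \eqref{pro4} is exactly $\frac{\mathrm{d}A}{\mathrm{d}\epsilon}>-\frac{\mathrm{d}B}{\mathrm{d}\epsilon}$, i.e., $\frac{\mathrm{d}A}{\mathrm{d}\epsilon}+\frac{\mathrm{d}B}{\mathrm{d}\epsilon}>0$, and $A>0$ since $A=1-P_{d}$ (or $A=1-P_{f}$ in the objective-function case) is a complementary probability, hence strictly positive for every finite $\epsilon$. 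Therefore the right-hand side is strictly positive, yielding $\frac{\mathrm{d}f}{\mathrm{d}\epsilon}>0$ on $0<\epsilon<\epsilon_{e}$.

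I expect the main obstacle to be the first step: rigorously ruling out any crossing of $A$ and $B$ before $\epsilon_{e}$. The clean one-line estimate in the last step relies entirely on having the strict inequality $A<B$ in hand, so the argument must pin down that $\epsilon_{e}$ is genuinely the \emph{first} point at which the increasing $A$ overtakes the decreasing $B$. Once that ordering is established, the rest is the short algebraic bound above, and the identical reasoning transfers to the objective function $O(\epsilon|g)$ simply by taking $A=1-P_{f}$.
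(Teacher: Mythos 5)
Your proposal is correct and follows essentially the same route as the paper's proof: both establish $A<B$ on $(0,\epsilon_{e})$ from the boundary/crossing conditions and then chain the ordering with property \eqref{pro4} to lower-bound $\frac{\mathrm{d}A}{\mathrm{d}\epsilon}B+A\frac{\mathrm{d}B}{\mathrm{d}\epsilon}$ by a positive quantity. The only cosmetic difference is the order of the two substitutions (the paper first applies \eqref{pro4} scaled by $B$ and then replaces $B$ by $A$ in the term $B\bigl(-\frac{\mathrm{d}B}{\mathrm{d}\epsilon}\bigr)$, whereas you replace $B$ by $A$ in $\frac{\mathrm{d}A}{\mathrm{d}\epsilon}B$ first and then invoke \eqref{pro4}), which yields the identical conclusion.
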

\begin{proof}
We have $A(\epsilon|g)<B(\epsilon|g)$ from (\ref{pro1}) and (\ref{pro3}), and using (\ref{pro4}), we have the following inequalities:
\begin{subequations}
\begin{align}
\frac{\mathrm{d} A(\epsilon|g) }{\mathrm{d} \epsilon }B(\epsilon|g)&>B(\epsilon|g)\left ( -\frac{\mathrm{d} B(\epsilon|g) }{\mathrm{d} \epsilon }\right ) 
\\
\frac{\mathrm{d} A(\epsilon|g) }{\mathrm{d} \epsilon }B(\epsilon|g)&>A(\epsilon|g)\left ( -\frac{\mathrm{d} B(\epsilon|g) }{\mathrm{d} \epsilon }\right ) 
\\
\frac{\mathrm{d}f(\epsilon|g) }{\mathrm{d} \epsilon}&>0. 
\end{align}
\end{subequations}
\end{proof}
\begin{proposition}
If $\epsilon_{e}<\epsilon$ and $\frac{\mathrm{d}f(\epsilon|g) }{\mathrm{d} \epsilon}<0$, $\frac{\mathrm{d}f(\epsilon+\Delta |g) }{\mathrm{d} \epsilon}<0$ where $\Delta\in \mathbb{R}^{+}$.
\end{proposition}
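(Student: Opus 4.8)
The plan is to recast the claim as a \emph{single-crossing} statement for $\frac{df}{d\epsilon}$. Together with the preceding result that $\frac{df}{d\epsilon}>0$ on $(0,\epsilon_{e})$, showing that the derivative can never return to a nonnegative value once it has become negative establishes exactly the type-1/type-2 dichotomy of Proposition~1, with the common limit $\gamma_{1}$ in (\ref{gam}). Writing $f=A\,B$ with $A,B>0$ and using $\frac{dB}{d\epsilon}<0$, I would divide the product rule $\frac{df}{d\epsilon}=\frac{dA}{d\epsilon}B+A\frac{dB}{d\epsilon}$ by $A\big(-\frac{dB}{d\epsilon}\big)>0$ to obtain the equivalence
\[
\frac{df}{d\epsilon}<0 \iff R(\epsilon):=\frac{\,dA/d\epsilon\,}{-\,dB/d\epsilon}<\frac{A}{B}.
\]
This converts the statement into a comparison of two scalar functions of $\epsilon$.

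The right-hand factor is the easy one. For $\epsilon>\epsilon_{e}$ we have $A>B$ by (\ref{pro3}) and the monotonicities used above, and since $A$ is increasing while $B$ is decreasing, $A/B$ is \emph{strictly increasing} on $(\epsilon_{e},\infty)$. Consequently, if $R$ can be shown to be \emph{non-increasing} for $\epsilon\ge\epsilon_{e}$, the claim is immediate: given $R(\epsilon_{0})<A(\epsilon_{0})/B(\epsilon_{0})$ at some $\epsilon_{0}>\epsilon_{e}$, then for every $\Delta\in\mathbb{R}^{+}$,
\[
R(\epsilon_{0}+\Delta)\le R(\epsilon_{0})<\frac{A(\epsilon_{0})}{B(\epsilon_{0})}\le\frac{A(\epsilon_{0}+\Delta)}{B(\epsilon_{0}+\Delta)},
\]
so that $\frac{df(\epsilon_{0}+\Delta|g)}{d\epsilon}<0$, which is the assertion.

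Everything therefore hinges on the monotonicity of $R$. I would exploit the explicit forms: $A$ is one minus a $Q$-function of an affine, increasing argument, hence $A=\Phi(x)$ with $\frac{dA}{d\epsilon}\propto\phi(x)$ a Gaussian density; and from $B=P_{a}=\rho_{h}/(\rho_{h}+\rho_{c})$ in (\ref{pa}) one gets $-\frac{dB}{d\epsilon}=\rho_{h}\,\rho_{c}'/(\rho_{h}+\rho_{c})^{2}$, where $\rho_{c}'$ is a positive combination of the Gaussian densities $|P_{f}'|$ and $|P_{d}'|$. Since $(\rho_{h}+\rho_{c})^{2}$ is bounded and increasing, $R$ is governed by a ratio of Gaussian densities whose arguments are affine in $\epsilon$, and I would conclude its monotonicity by comparing their log-concave tails.

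The hard part is precisely this monotonicity of $R$, because $\rho_{c}'$ is a sum of two Gaussian bumps peaking at $\epsilon=\sigma_{w}^{2}$ and at $\epsilon=(g\sigma_{p}^{2}/\sigma_{w}^{2}+1)\sigma_{w}^{2}$, so $-\frac{dB}{d\epsilon}$ is itself non-monotone and $R$ is not globally monotone. I expect to split into the two regimes matching Proposition~1: when $f$ is built from $1-P_{f}$ (the objective $O$), the numerator $\phi$ decays strictly faster than the $P_{d}$-dominated tail of $\rho_{c}'$, forcing $R\downarrow0$; when $f$ is built from $1-P_{d}$ (the collision $P_{c}$), numerator and dominant denominator share the same Gaussian decay, and I would instead bound $R$ through the monotone reversed-hazard rate $\phi(x)/\Phi(x)$ together with the boundedness of $\rho_{c}$. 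Verifying that $R$ is non-increasing on $[\epsilon_{e},\infty)$ in both regimes is the crux; the remaining monotone-comparison step is elementary.
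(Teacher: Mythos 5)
Your reduction is sound as far as it goes: dividing $\frac{df}{d\epsilon}=\frac{dA}{d\epsilon}B+A\frac{dB}{d\epsilon}$ by $A\bigl(-\frac{dB}{d\epsilon}\bigr)>0$ correctly turns the claim into the comparison $R(\epsilon)<A(\epsilon)/B(\epsilon)$, and the observation that $A/B$ is strictly increasing is right. But the proof then rests entirely on $R=\frac{dA/d\epsilon}{-dB/d\epsilon}$ being non-increasing on $[\epsilon_{e},\infty)$, and that step is never carried out. Worse, you yourself concede that $-\frac{dB}{d\epsilon}$ is a superposition of two Gaussian bumps (peaking at $\sigma_{w}^{2}$ and at $(g\sigma_{p}^{2}/\sigma_{w}^{2}+1)\sigma_{w}^{2}$) and that $R$ is \emph{not} globally monotone, so the load-bearing lemma may simply be false in the form you need it; the two-regime tail comparison you sketch is a plan, not an argument. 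Note also that monotonicity of $R$ is stronger than what the proposition requires -- all you need is that $R$ cannot cross $A/B$ from below on $(\epsilon_{e},\infty)$ -- so even the correct target of the hard step is misidentified. As written, the proposal is an unproved (and doubtful) reduction rather than a proof.

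The paper sidesteps exactly this difficulty by normalizing differently: it divides $\frac{df}{d\epsilon}$ by $AB$ rather than by $A\bigl(-\frac{dB}{d\epsilon}\bigr)$, so the condition $\frac{df}{d\epsilon}<0$ becomes $\frac{dA/d\epsilon}{A}<\frac{-dB/d\epsilon}{B}$, a comparison of logarithmic derivatives. The only structural fact it then needs about $A$ is that its inflection point satisfies $\epsilon_{i}<\epsilon_{e}$, so that for $\epsilon>\epsilon_{e}$ the numerator $\frac{dA}{d\epsilon}$ decreases while $A$ increases, making $\frac{dA/d\epsilon}{A}$ decreasing (inequalities (\ref{pr61a})--(\ref{pr61c})); on the $B$ side it only uses that $B$ is decreasing, which converts $\frac{-dB(\epsilon+\Delta)/d\epsilon}{B(\epsilon)}$ into $\frac{-dB(\epsilon+\Delta)/d\epsilon}{B(\epsilon+\Delta)}$ (inequality (\ref{pr62b})). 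At no point does the paper require any monotonicity of $-\frac{dB}{d\epsilon}$ itself, which is precisely the object your route forces you to control. If you want to repair your argument, the cleanest fix is to adopt that normalization: prove the reversed-hazard-type quantity $\frac{dA/d\epsilon}{A}$ is decreasing past $\epsilon_{e}$ and combine it with property (\ref{pro4}) and the monotonicity of $B$, rather than attempting to tame the ratio of the two Gaussian-mixture derivatives directly.
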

\begin{proof}

\begin{subequations}
\begin{align}
\frac{\mathrm{d} A(\epsilon+\Delta|g) }{\mathrm{d} \epsilon }<\frac{\mathrm{d} A(\epsilon|g) }{\mathrm{d} \epsilon}  \label{pr61a}
\\
\frac{\frac{\mathrm{d} A(\epsilon+\Delta|g) }{\mathrm{d} \epsilon }}{A(\epsilon|g)}<\frac{\frac{\mathrm{d} A(\epsilon|g) }{\mathrm{d} \epsilon }}{A(\epsilon|g)}\label{pr61b}
\\
\frac{\frac{\mathrm{d} A(\epsilon+\Delta|g) }{\mathrm{d} \epsilon }}{A(\epsilon+\Delta|g)}<\frac{\frac{\mathrm{d} A(\epsilon|g) }{\mathrm{d} \epsilon}}{A(\epsilon|g)}. \label{pr61c}
\end{align}
\end{subequations}
Inequality (\ref{pr61a}) is satisfied by $\epsilon_{i}<\epsilon_{e}$ where $\epsilon_{i}$ is the inflection point of $A(\epsilon|g)$, and (\ref{pr61c}) is satisfied by the increasing function property $A(\epsilon+\Delta|g)>A(\epsilon|g)$
\begin{subequations}
\begin{align}
\frac{\frac{\mathrm{d} A(\epsilon|g) }{\mathrm{d} \epsilon }}{A(\epsilon|g)}&<\frac{-\frac{\mathrm{d} B(\epsilon+\Delta|g) }{\mathrm{d} \epsilon }}{B(\epsilon|g)} \label{pr62a}
\\
\frac{\frac{\mathrm{d} B(\epsilon+\Delta|g) }{\mathrm{d} \epsilon }}{B(\epsilon|g)}&<\frac{-\frac{\mathrm{d} B(\epsilon+\Delta|g) }{\mathrm{d} \epsilon }}{B(\epsilon+\Delta|g)} \label{pr62b}
\\
\frac{\frac{\mathrm{d} A(\epsilon+\Delta|g) }{\mathrm{d} \epsilon }}{A(\epsilon+\Delta|g)}&<\frac{-\frac{\mathrm{d} B(\epsilon+\Delta|g) }{\mathrm{d} \epsilon }}{B(\epsilon+\Delta|g)}. \label{pr62c}
\end{align}
\end{subequations}
Inequality (\ref{pr62a}) is satisfied by $\epsilon_{e}<\epsilon$, and (\ref{pr62b}) is satisfied by the decreasing function property $B(\epsilon+\Delta|g)<B(\epsilon|g)$. Finally, from (\ref{pr61c}), the inequality (\ref{pr62c}) holds.
\end{proof}
From Propositions 5 and 6, the equation $\frac{\mathrm{d}f(\epsilon|g) }{\mathrm{d} \epsilon}=0$ has one solution or no solution, so that the function $f(\epsilon|g)$ converges to $\gamma_{1}=(\lim_{\epsilon\rightarrow 0}A(\epsilon|g))(\lim_{\epsilon\rightarrow 0}B(\epsilon|g) ) $. 
\
\section{Proof of proposition 3}
From (\ref{SSP}), we have
\begin{subequations}
\begin{align}
\Pi\left ( \mathbf{I}-\mathbf{U}(\epsilon |g) \right ) &=0
\\
\left ( \mathbf{I}-\mathbf{U}(\epsilon |g) \right )^{T}\Pi^{T} &=0
\\
\mathbf{G}(\epsilon |g)\Pi^{T} &=0.
\end{align}
\end{subequations}
We define $\mathbf{G}(\epsilon |g)=\left ( \mathbf{I}-\mathbf{U}(\epsilon |g) \right )^{T}$. If Assumption 1 holds, then $\mathbf{G}(\epsilon |g)$ is given by
\begin{eqnarray}
\mathbf{G} (\epsilon |g)= \Big[\!\!
\begin{array}{ccc}
\mathbf{G}_{1} (\epsilon |g) & \hspace{-0.1in}  \Big| & \hspace{-0.1in} \mathbf{G}_{2} (\epsilon |g) \\
\end{array} \!\!\Big] 
\end{eqnarray}
\begin{eqnarray}
\mathbf{G}_{1} (\epsilon  |g) = \left(
\begin{array}{cccccccccccccccc}
\alpha &  &   \\
 &  \ddots  & \\
 &  &  \alpha \\
0 & 0 & 0  \\
 & \vdots &  \\
0 & 0 & 0  \\
-\alpha &  &   \\
 &  \ddots  & \\
 &  &  -\alpha \\
0 & 0 & 0  \\
 & \vdots &  \\
\end{array} \right)  
\end{eqnarray}
\begin{eqnarray}
\mathbf{G}_{2} (\epsilon  |g) = \left(
\begin{array}{ccccccccc}
-\beta &  &   \\
 &  \ddots  & \\
 &  &  -\beta \\
0 & 0 & 0  \\
 & \vdots &  \\
0 & 0 & 0  \\
\beta &  &   \\
 &  \ddots  & \\
 &  &  \beta \\
\end{array} \right) . 
\end{eqnarray}

The steady-state vector $\Pi^{T}$ is the null vector of $\mathbf{U} (\epsilon |g)$, and hence the vector in (\ref{eqn:Ei}) is a kind of the null vector.
\
\section{Proof of proposition 4}
We define $A(\epsilon )=(1-P_{f}\left ( \epsilon \right ))$ and $B(\epsilon|g)=P_{a}\left ( \epsilon |g \right )$ for the proof of the MDP objective function $O^{M}(\epsilon |g)$ case.
\begin{eqnarray}
\frac{\mathrm{d}f^{M}(\epsilon|g) }{\mathrm{d} \epsilon}=\frac{\mathrm{d} A(\epsilon) }{\mathrm{d} \epsilon }B(\epsilon|g)+A(\epsilon)\frac{\mathrm{d} B(\epsilon|g) }{\mathrm{d} \epsilon}. 
\end{eqnarray}
The increasing function $A(\epsilon )$ and the decreasing function $B(\epsilon | g)$ satisfy the following properties (\ref{pro1}) - (\ref{pro4}). From Proposition 5, we can derive that if $0<\epsilon<\epsilon_{e}$, $\frac{\mathrm{d}f^{M}(\epsilon|g) }{\mathrm{d} \epsilon}>0$.
Also, from Proposition 6, we can derive that if $\epsilon_{e}<\epsilon$ and $\frac{\mathrm{d}f^{M}(\epsilon|g) }{\mathrm{d} \epsilon}<0$, $\frac{\mathrm{d}f^{M}(\epsilon+\Delta |g) }{\mathrm{d} \epsilon}<0$ where $\Delta \in \mathbb{R}^{+}$. 
Thus, the equation $\frac{\mathrm{d}f^{M}(\epsilon| g) }{\mathrm{d} \epsilon}=0$ has one solution or no solution, and the function $f^{M}(\epsilon| g)$ converges to $\gamma_{2}=(\lim_{\epsilon\rightarrow 0}A(\epsilon))(\lim_{\epsilon\rightarrow 0}B(\epsilon| g) ) $.


\begin{thebibliography}{1}

\bibitem{T11} T. Chen, Y. Yang, H. Zhang, H. Kim, and K. Horneman, ``Network energy saving technologies for green wireless access networks,"
{\em IEEE Wireless Commun.}, vol. 18, no. 5, pp. 30-38, Oct. 2011.

\bibitem{C11} C. Han et al., ``Green radio: Radio techniques to enable energy-efficient wireless networks,"
{\em IEEE Commun. Mag.}, vol. 49, no. 6, pp. 46-54, Jun. 2011.

\bibitem{X15} X. Lu, P. Wang, D. Niyato, D. I. Kim, and Z. Han, ``Wireless networks with RF energy harvesting: A contemporary survey," 
{\em IEEE Commun. Surveys \& Tutorials}, vol. 17, no. 2, pp. 757-789, Second Quarter 2015.

\bibitem{X16} X. Lu, P. Wang, D. Niyato, D. I. Kim, and Z. Han, ``Wireless charging technologies: Fundamentals, standards, and network applications," 
{\em IEEE Commun. Surveys \& Tutorials}, vol. 18, no. 2, pp. 1413-1452, Second Quarter 2016. 

\bibitem{A12} A. M. Zungeru, L. Ang, S. Prabaharan, and K. P. Seng, ``Radio frequency energy harvesting and management for wireless sensor networks," 
{\em Green Mobile Devices and Networks: Energy Optimization and Scavenging Techniques}, ch. 13, pp. 341-368, CRC Press 2012. 

\bibitem{L13} S. Lee, R. Zhang, and K. Huang, ``Opportunistic wireless energy harvesting in cognitive radio networks,'' 
{\em IEEE Trans. Wireless Commun.}, vol. 12, pp. 4788-4799, Sep. 2013.

\bibitem{S13} S. Park, H. Kim, and D. Hong, ``Cognitive radio networks with energy harvesting," 
{\em IEEE Trans. Wireless Commun.}, vol. 12, pp. 1386-1397, Mar. 2013.

\bibitem{Q07} Q. Zhao, L. Tong, A. Swami, and Y. Chen, ``Decentralized cognitive MAC for opportunistic spectrum access in ad hoc networks: A POMDP framework,"
{\em IEEE J. Select. Area. Commun.}, vol. 25, no. 3, pp. 589-600, 2007. 

\bibitem{Blei} D. Blei and M. Jordan, ``Variational inference for Dirichlet process mixtures,''
{\em Bayesian Analysis}, vol. 1, no. 1, pp. 121-144, Aug. 2006.

\bibitem{Wood} F. Wood and M. J. Black, ``A nonparametric bayesian alternative to spike sorting,''
{\em Journal of Neuroscience Methods}, vol. 173, no. 1, pp. 1-12, Jun. 2008.

\bibitem{GrGh05} T. L. Griffiths and Z. Ghahramani, ``Infinite latent feature models and the Indian buffet process,''
Gatsby Computational Neuroscience Unit, Tech. Rep. 2005-001, 2005.

\bibitem{S12} S. Park, J. Heo, B. Kim, W. Chung, H. Wang, and D. Hong, ``Optimal mode selection for cognitive radio sensor networks with RF energy harvesting," {\em IEEE Proc. PIMRC 2012}, pp. 2155-2159, 2012.

\bibitem{S13} S. Park, H. Kim, and D. Hong, ``Cognitive radio networks with energy harvesting,"
{\em IEEE Trans. Wireless Commun.}, vol. 12, no. 3, pp. 13861397, Mar. 2013.

\bibitem{Bishop_book} C. M. Bishop,
``Mixtures of gaussians,'' in {\em Pattern recognition and machine learning} 1st ed., Springer-Verlag New York, ch. 9, sec. 2, pp. 430-435.

\bibitem{AS12} A. Sultan, ``Sensing and transmit energy optimization for an energy harvesting cognitive radio,"
{\em IEEE Wireless Commun. Letters}, vol. 1, no. 5, pp. 500-503, Oct. 2012.

\bibitem{D14} D. T. Hoang, D. Niyato, P. Wang, D. I. Kim, ``Opportunistic channel access and RF energy harvesting in cognitive radio network,'' {\em IEEE Journal on Selected Areas in Communications - Cognitive Radio Series}, vol. 32, pp. 2039-2052, Nov. 2014.

\bibitem{D15} D. T. Hoang, D. Niyato, P. Wang, D. I. Kim, ``Performance optimization for cooperative multiuser cognitive radio networks with RF energy harvesting capability,'' {\em IEEE Trans. Wireless Commun.}, vol. 14, pp. 3614-3629, July 2015.

\bibitem{M14} M. E. Ahmed, J. B. Song, Z. Han, and D. Y. Suh, ``Sensing-transmission edifice using Bayesian nonparametric traffic clustering in cognitive radio networks," 
{\em IEEE Trans. Mobile Computing}, vol. 13, pp. 2141-2155, Sep. 2014. 

\bibitem{Y16} M. E. Ahmed, D. I. Kim, J. Y. Kim, and Y. A. Shin, ``Energy-arrival-aware detection threshold in wireless-powered cognitive radio networks," 
{\em IEEE Trans. Vehic. Technol.}, vol. 66, pp. 9201-9213, Oct. 2017. 

\bibitem{K16} M. E. Ahmed, D. I. Kim, and K. W. Choi, ``Traffic-aware optimal spectral Access in wireless powered cognitive radio networks," 
{\em IEEE Trans. Mobile Computing}, vol. 17, pp. 734-745, Mar. 2018. 

\bibitem{J14} J. Xu, and R. Zhang, ``Throughput optimal policies for energy harvesting wireless transmitters with non-ideal circuit power,"
{\em IEEE J. Select. Area. Commun.}, vol. 32, pp. 322-332, Feb. 2014.

\bibitem{M12} M. E. Ahmed, J. B. Song, N. T. Nguyen, and Z. Han, ``Nonparametric Bayesian identification of primary users' payloads in cognitive radio networks,'' {\em IEEE Proc. ICC 2012,} pp. 1586-1591, June 2012.

\bibitem{ME14} M. E. Ahmed, J. B. Song, Z. Han, and D. Y. Suh, ``Sensing-transmission edifice using Bayesian nonparametric traffic clustering in cognitive radio networks,'' {\em IEEE Trans. Mobile Computing,} vol. 13, no. 9, pp. 2141-2155, Sept. 2014.

\bibitem{W14} W. S. Chung, S. S. Park, S. M. Lim, and D. S. Hong, ``Spectrum sensing optimization for energy-harvesting cognitive radio systems," 
{\em IEEE Trans. Wireless Commun.}, vol. 13, pp. 2601-2613, May 2014.

\bibitem{Y11} Y. Pei, Y. C. Liang, K. C. Teh, and K. H. Li, ``Energy-efficient design of sequential channel sensing in cognitive radio networks: Optimal sensing strategy, power allocation, and sensing order," 
{\em IEEE J. Select. Area. Commun.}, vol. 29, pp. 1648-1659, Aug. 2011. 

\bibitem{O1} [Online]. Available FTP: \url{http://crawdad.cs.dartmouth.edu/meta.php?name=snu/wow via wimax}

\bibitem{O2} [Online]. Available FTP: \url{http://crawdad.cs.dartmouth.edu/meta.php?name=kaist/wibro}

\end{thebibliography}
\end{document}